\documentclass[10pt]{article} 

\usepackage[utf8]{inputenc} 

\usepackage{graphicx} 
\usepackage{amsthm}

\usepackage{geometry} 
\geometry{a4paper} 

\usepackage{graphicx} 


\usepackage{booktabs} 
\usepackage{array} 
\usepackage{paralist} 
\usepackage{verbatim} 
\usepackage{subfig} 
\usepackage{wrapfig}
\usepackage{amsmath, amsthm, amssymb}
\usepackage{authblk}

\newtheorem{theorem}{Theorem}[section]

\newtheorem{corollary}{Corollary}[section]

\newcommand*{\QED}{\hfill\ensuremath{\square}}


\usepackage{fancyhdr} 
\pagestyle{fancy} 
\lhead{}\chead{}\rhead{}
\lfoot{}\cfoot{\thepage}\rfoot{}

\usepackage{sectsty}
\allsectionsfont{\sffamily\mdseries\upshape} 

\usepackage[nottoc,notlof,notlot]{tocbibind} 
\usepackage[titles,subfigure]{tocloft} 




\title{An Exact, Time-Independent Approach to Clone Size Distributions in Normal and Mutated Cells}
\author[1]{Roshan A}
\author[1]{Jones PH}
\author[2]{Greenman CD\textsection}
\affil[1]{\footnotesize{MRC Cancer Cell Unit, Hutchison-MRC Research Centre, Cambridge, CB2 2XZ, UK}}
\affil[2]{\footnotesize{School of Computing Sciences, University of East Anglia, Norwich, NR4 7TJ, UK}}
\date{} 

\begin{document}
\maketitle

\allowdisplaybreaks

\begin{abstract}

Biological tools such as genetic lineage tracing, 3D confocal microscopy and next generation DNA sequencing are providing new ways to quantify the distribution of clones of normal and mutated cells. Population-wide clone size distributions in vivo are complicated by multiple cell types, and overlapping birth and death processes. This has led to the increased need for mathematically informed models to understand their biological significance. Standard approaches usually require knowledge of clonal age. We show that modelling on clone size independent of time is an alternative method that offers certain analytical advantages; it can help parameterize these models, and obtain distributions for counts of mutated or proliferating cells, for example. When applied to a general birth-death process common in epithelial progenitors this takes the form of a gambler’s ruin problem, the solution of which relates to counting Motzkin lattice paths. Applying this approach to mutational processes, an alternative, exact, formulation of the classic Luria-Delbrück problem emerges. This approach can be extended beyond neutral models of mutant clonal evolution, and also describe some distributions relating to sub-clones within a tumour. The approaches above are generally applicable to any Markovian branching process where the dynamics of different ‘coloured’ daughter branches are of interest.
\newline
\smallskip
\\
\noindent \textbf{Key Words}: Clone Size Distribution; Dyck Paths; Motzkin Triangle; Luria-Delbrück; Mathematical Modeling
\end{abstract}

\let\thefootnote\relax\footnote{\textsection Corresponding Author}


\section{Introduction}

One approach to understanding the cellular hierarchy in multicellular organized tissue has been tracking the fate of individual cells either labeled in vivo or isolated ex vivo \cite{Clayton}-\cite{Barrandon}. Improved techniques including genetic lineage tracing and 3D imaging by confocal microscopy have helped further investigate this basic area of research \cite{Kretzschmar}–\cite{Blanpain}. Typically, a cell type of interest is labeled with an identifier and the distribution of its progeny at later time points are observed. Clone distribution data can then be used to decipher division dynamics across the population of cells with great resolution. However, the current methods use population averaging, and are time-dependent posing analytical challenges. There is a need for alternative statistical approaches that may be complementary.

\begin{figure}[t!]
\centering
\includegraphics[width=150mm]{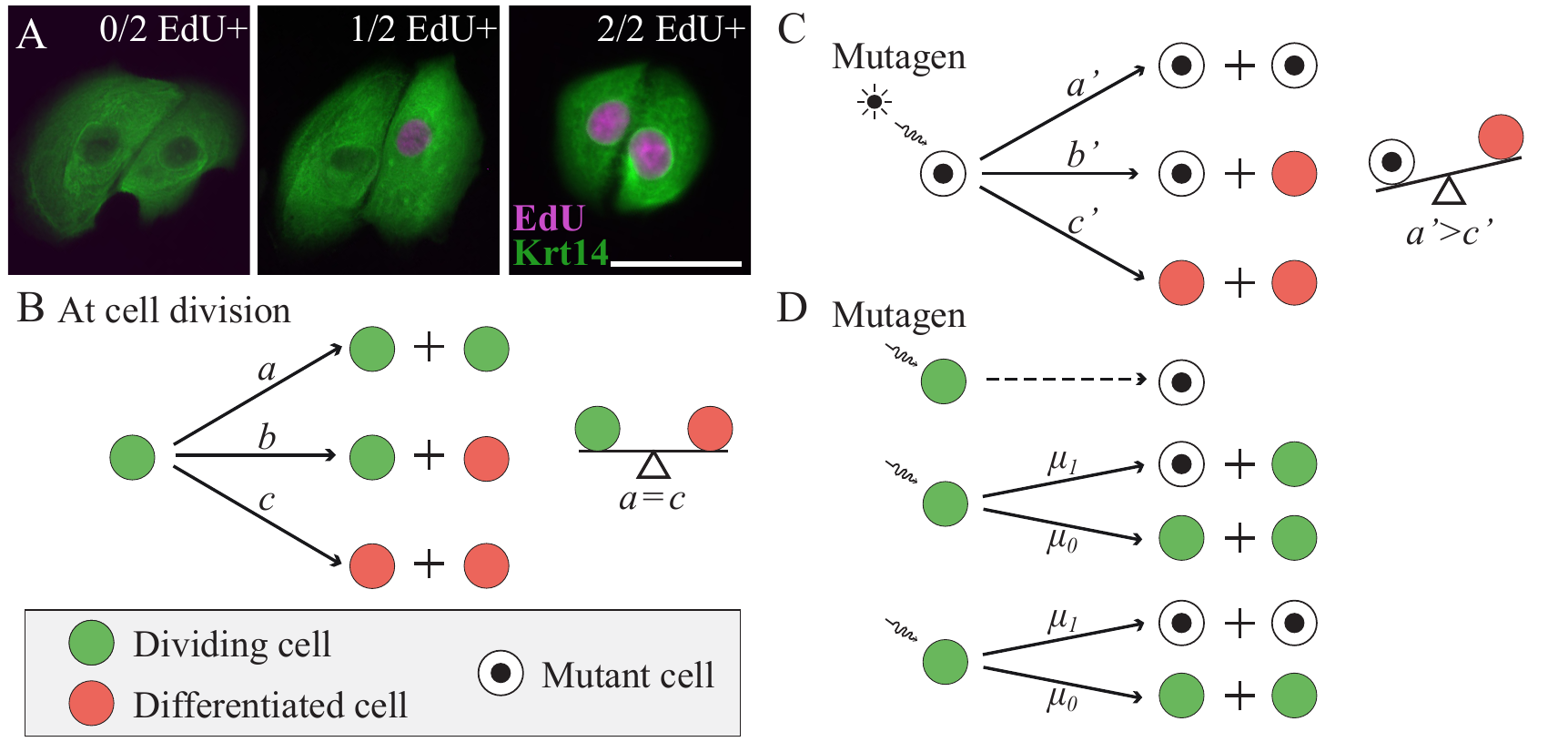}
\caption{Colony formation in normal and mutated cells. (A) Immunofluorescence images of 2-cell clones with the keratinocyte marker Keratin14, and the proliferation marker EdU, showing three possible outcomes of division: two non-proliferating daughters (0/2 EdU+), a non-proliferating and a proliferating daughter (1/2 EdU+), or two proliferating daughters (2/2 EdU+). Scale bar 50$\mu$. (B) Cell division is a birth-death process with three possible outcomes based on the proliferative ability of its daughters. As above, a dividing cell (P) may divide to two dividing daughters (PP), a dividing and differentiated daughter (PD), or two differentiated daughters (DD) in proportions $a$, $b$ and $c$ respectively.  In tissues with high turnover, the number of new dividing cells is equal to the number of non-dividing cells ($a=c$). (C) In the presence of mutagens like UV radiation, this process is imbalanced in p53 mutant clones in favour of proliferation ($a’>c’$). This gives a survival advantage to mutant clones. (D) Mutant cell formation itself is a birth process that can follow one of three possibilities. The first is cell division independent and can occur with background exposure. The second and third possibilities occur following cell division, producing one or two mutant cells out of two daughter cells with probability $\mu_1=1-\mu_0$.}
\label{MutationDivision}
\end{figure}

Adult mammalian epithelium has a high rate of cell division during steady state. Despite this rapid rate of proliferation, the tissue remains in homeostasis as new cells are being generated at the same rate as loss of differentiated cells in a birth-death process ($a=c$ in Figure \ref{MutationDivision}B). A simple illustration of this is in the inter-follicular epidermis, where cell division occurs in the basal layer of a multi-layered epithelium. Cell division here can produce proliferating daughters, that remain in the basal layer, or non-dividing daughters, which are shed to the supra-basal layers, and eventually lost in a process of differentiation. When these keratinocytes are grown in culture, a typical cell division can result in two dividing daughters, one dividing daughter or no dividing daughter out of two total daughters as seen through the uptake of the proliferation marker 5-ethynyl-2'-deoxyuridine (EdU, Figure \ref{MutationDivision}A). Genetic lineage tracing in basal keratinocytes has allowed conditional expression of fluorescent proteins, with all subsequent daughter cells retaining the label, and thus being highlighted as a clone. Clone size distributions thus observed shows maintenance through a population asymmetry of fate outcome in dividing progenitors, with reserve stem cells contributing to wound healing \cite{Clayton},\cite{Mascre},\cite{Klein2}. Additionally, this balance is disturbed in chronic UV irradiation, where p53 mutant keratinocyte clones gain a survival advantage over non-mutant clones mediated through increased proportions of proliferative daughters \cite{Jones} ($a>c$ in Figure \ref{MutationDivision}C). The recent technical advance of live-imaging in epithelia may provide additional information to these models, such as the distribution of cell cycle times \cite{Greco}.

There is also an increasing body of work investigating the growth dynamics of pre-neoplastic and neoplastic tissue \cite{Youssef}–\cite{Barker}. A growing colony of cells can be modeled as a branching process. Luria and Delbrück were the first to produce an analytical examination of the distribution of the number of mutant cells in growing bacterial colonies \cite{LuriaDelbruck}. They used this to show that mutations arise randomly rather than in response to the environment. Their argument was partly deterministic and Lea and Coulson \cite{LeaCoulson} and Bartlett \cite{Bartlett}, \cite{Bartlett2} derived approaches with greater stochastic rigour. These methods generally consider the problem of how many mutants are present after a fixed amount of time. An unpublished combinatorial method by Haldane also exists \cite{Haldane} where all cells divide simultaneously.

These distributions generally assign genes the binary status of mutated or non-mutated. They do not consider the number of mutations in a gene, or the number of different combinations of mutations a subclone of cells may contain. Modern sequencing techniques mean greater resolution of mutations is now possible and there is increased interest in considering distributions associated with combinations of mutations \cite{Nik-Zainal}.

As Kendall observed \cite{Kendall}, \cite{Zheng} there are broadly three models for mutation formulation (Figure \ref{MutationDivision}D). The first formulation would indicate a single cell converts to mutated status at any time independent of the cell division process. This may be the case for continuous exposure to mutagens, such as UV light \cite{Stahl}. The second formulation is the most common formulation where mutations occur in one of the two daughter cells during the cell division process. This is likely to be the case for many mutational processes, where nucleotide errors occur on one of the two DNA strands \cite{Bont}. DNA repair machinery then erroneously corrects this during checkpoints in the cell cycle, resulting in one mutant daughter cell. The third formulation assumes that both daughter cells are mutant. This is also a valid model, and is likely to arise when double stranded breaks occur. When double stranded repair incorrectly repairs the damage, rearrangements result and both daughter cells will be mutant. Some processes such as breakage-fusion-bridge cycles will even result in two mutant daughter cells with distinct rearrangements \cite{Greenman}, \cite{McClintock}. For analytical purposes in this paper, we will assume the most common second formulation. Additionally, we assume that a mutation does not increase the chance of cell loss through apoptosis.

In this work, we consider a different statistical approach to clonal distributions. A standard technique to analyzing a branching process involving two classes of objects, such as mutant/non-mutant, or progenitor/differentiated, is to write down a Chapman-Kolomogorov equation for $P_{m,n}(t)$; the probability of having $m$ and $n$ cells of the two types, at time $t$, and obtain a solution \cite{Kampen}. Instead, we determine the distribution of the number of different types of cells that are present when a fixed number of cells have accumulated, rather than the time that has passed. With this approach, we will see that treating cell differentiation or mutation as time-independent results in exact analytic forms for the distributions of interest. In the next section we obtain the distribution for the number of dividing cells in an epithelial population. We then obtain distributions for the number of mutant cells in a clone undergoing a pure birth process. 


\section{Distribution of Colony Sizes in Homeostatic Tissue}

Tissue homeostasis is balanced by two types of cells; progenitor (dividing) cells ($P$), and differentiated (non-dividing) cells ($D$). As progenitor cells (P) divide, they produce two daughter cells which may be either a progenitor cell or a differentiated cell (D) resulting in the combinations (PP), (PD) or (DD). We assume the probabilities of these occurring are $a$, $b$, and $c$, respectively represented in Figure \ref{MutationDivision}B. Across a population, these probabilities are assumed to be constant, holding the same values for any cell division that takes place at steady state. There is the possibility that apoptosis may form an additional component of this process. Whilst one could incorporate this as an additional branch in the process of Figure \ref{MutationDivision}A, it is assumed negligible in the following analysis.

For simplicity, we assume that we start with a single dividing cell. We also assume the number of descendant cells can be observed, but that (P) and (D) cells can not be distinguished. There are two problems we would like to consider. Firstly, if we trace the lineage of a single cell, we wish to determine the distribution of the number of progenitor (P) cells present. Secondly, the physical similarity between (P) and (D) cells without any protein markers make the parameters $a$, $b$ and $c$ difficult to directly measure. Thus, we would like a method to estimate them. 

\begin{wrapfigure}{r}{0.5\textwidth}
\centering
\includegraphics[height=80mm]{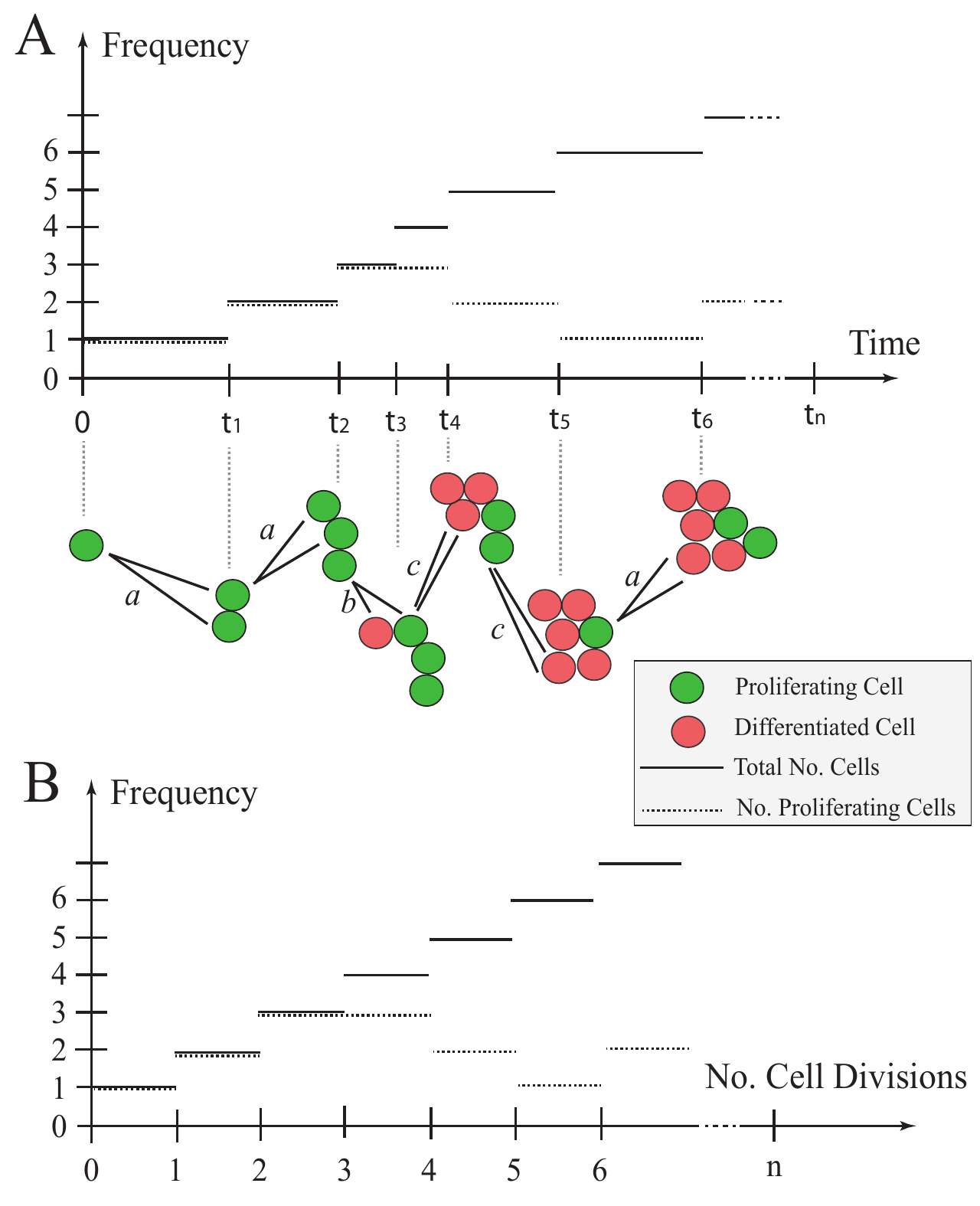}
\caption{A branching process of differentiated and proliferating cells. A single dividing cell is followed in time with the height of the solid line indicating total number of cells, and the height of the dashed line indicating number of dividing cells. In (A), plotted against time, we see the rate of cell division is dependent upon the number of proliferating cells. In (B), plotted against number of cell divisions, we see the number of proliferating cells only depends upon the nature and number of cell divisions, not their timing.}
\label{MutationDivisionB}
\end{wrapfigure}

Now, our approach is based on the size of the clone (rather than time passed). Now, with each cell division, irrespective of outcome, the colony size $n$ increases by $1$ forming a clone of $n+1$ cells. If the cell division results in two progenitor daughters (PP), the number of dividing cells $k$ increases to $k+1$. If the cell division results in a progenitor cell and a differentiated cell (PD), the number of dividing cells $k$ stays the same. The production of two differentiated daughters (DD) results in a loss of dividing cells to $k-1$. We can thus model the number of P cells as a discrete random walk that can move up, remain flat, or move down with probabilities $a$, $b$ and $c$, where we have one forward step to take at every cell division as in Figure \ref{MutationDivisionB}A,B. Note that if the colony becomes fully differentiated, $k=0$, we have no dividing cells and our process stops.

We note that the timing of these divisions does not relate to the count of proliferating cells. In Figure \ref{MutationDivisionB}A we see the time dependent process, with a division rate that will be proportional to the number of proliferating cells. In Figure \ref{MutationDivisionB}B we see the same information indexed by the number of cell divisions; the timing is not important.

Such a problem is closely related to counting Motzkin lattice paths \cite{Motzkin}. Lattice paths are paths connecting positions with integer coordinates and can take a variety of forms \cite{Mohanty}, \cite{Narayana}. In particular, Motzkin paths start from the origin $(0,0)$ on a 2-d integer lattice and allow movement with an \emph{up} $(1,1)$ step, a \emph{flat} $(1,0)$ step, or a \emph{down} $(1,-1)$ step such that we never move below the horizontal axis.  There are several path counting techniques for such conditions \cite{Motzkin}, \cite{Donaghey}, \cite{Sulanke}, which have also seen applications to paths similar to the ones we describe \cite{Lengyel} \cite{Niederhausen}. These have been studied for a range of combinatorial problems \cite{Stanley}, including some problems with weighted edges \cite{Meshkov}. 

These paths can be utilised to represent our problem. The position $(n,k)$ corresponds to the total number of cells, $n$, and the number of dividing cells, $k$, respectively. The PP, PD or DD divisions correspond to the up, flat and down steps, respectively. There are three differences to Motzkin paths to note. Firstly, we start with one (P) cell, represented by position $(1,1)$. Secondly, we stop if we touch the horizontal axis, because no dividing (P) cells remain ($k=0$). Lastly, we have probabilities $a$, $b$ and $c$ associated with each step. Now, we would like to find the probability $P_{n,k}$ of finding $k$ dividing cells in a clone of size $n$. This probability then corresponds to a weighted sum of Motzkin paths from $(1,1)$ to $(n,k)$, where Motzkin paths in this context do not touch the horizontal axis.

\begin{figure}[t!]
\centering
\includegraphics[height=100mm,width=135mm]{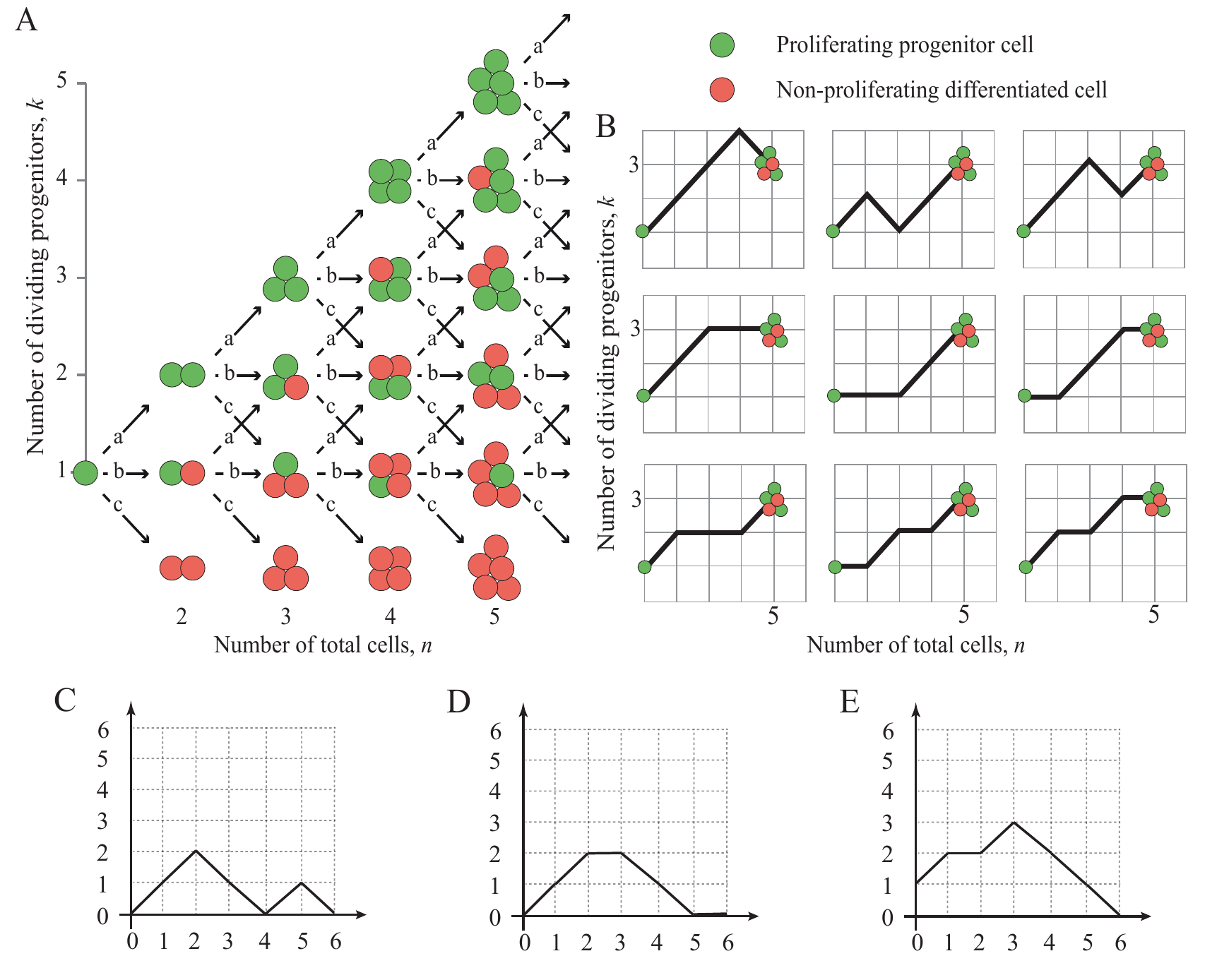}
\caption{Cell proliferation as a combinatorial branching process with predictable paths. (A) Cell division in progenitor cells is a branching process with three possible outcomes (PP, PD or DD, See Figure 1). The expansion of a single cell to form a clone of cells is thus a combinatorial process, where any outcome of total clone size $n$ and proliferating cells within it $k$ occurs along fixed paths of a Motzkin like triangle. Clones that reach the horizontal axis have only non-dividing cells, and therefore do not progress further. (B) Example showing the $9$ paths that a single proliferating cell can take to reach a clone of $n=5$ and $k=2$. The first three routes have three $a$ divisions and one $c$ division, while the remaining six routes involve two each of $a$ and $b$ divisions (cumulative probability $=3a^3c+6a^2b^2$). (C) is a Dyck path, which moves up and down and not below the horizontal axis. (D) is a Motzkin path, which also includes horizontal moves. (E) is a gamblers ruin problem, which starts from height $1$ rather than the origin, representing the formation of a fully differentiated clone from a single dividing cell.}
\label{Lattice}
\end{figure}


\subsection{Motzkin Paths Describe the Entire Distribution of Colony Sizes}

We have the following distribution for the number of progenitor (P) cells in a colony.

\begin{theorem}
If we seed a single dividing cell, then the probability of having $k(>1)$ dividing cells when the colony is of size $n$ is given by:

$P_{n,k}=\sum\limits_{i=0}^{\lfloor \frac{n-k}{2} \rfloor}{n-1 \choose k+2i-1}({k+2i-1 \choose i}-{k+2i-1 \choose i-1})a^{k+i-1}b^{n-k-2i}c^i$

\end{theorem}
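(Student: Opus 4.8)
The plan is to read $P_{n,k}$ as the weighted count of lattice paths set up above and to evaluate that count by peeling off the (height-preserving) flat steps and then applying a reflection argument to the up/down steps. Concretely, a path contributing to $P_{n,k}$ runs from $(1,1)$ to $(n,k)$ using $u$ up steps, $f$ flat steps and $d$ down steps, and carries weight $a^u b^f c^d$. The two conservation laws $u+f+d=n-1$ (one step per division) and $u-d=k-1$ (net gain in proliferating cells) let me eliminate variables: putting $d=i$ gives $u=k+i-1$ and $f=n-k-2i$, so every admissible path has weight $a^{k+i-1}b^{n-k-2i}c^i$, already matching the summand. The range of $i$ is forced by $f\ge 0$, i.e. $0\le i\le\lfloor (n-k)/2\rfloor$, which is exactly the stated summation limit.

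Next I would count the admissible paths for fixed $i$. Because a flat step leaves the height unchanged, a path stays strictly above the axis (height $\ge 1$) if and only if the subsequence of its up and down steps does. Hence I can assemble every admissible path in two independent choices: first choose which $u+d=k+2i-1$ of the $n-1$ steps are non-flat, contributing the factor $\binom{n-1}{k+2i-1}$; then choose an up/down word on those positions that starts at height $1$, ends at height $k$, and never reaches $0$. The remaining factor $\binom{k+2i-1}{i}-\binom{k+2i-1}{i-1}$ must therefore equal the number of such constrained up/down words.

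To pin down that factor I would invoke the reflection principle. The unconstrained up/down words with $d=i$ down steps among $k+2i-1$ letters number $\binom{k+2i-1}{i}$. A word is forbidden precisely when its height walk, started at $1$, touches $0$; reflecting the initial segment of the walk up to its first visit to $0$ about the horizontal axis gives a bijection between forbidden words and unrestricted walks of the same length started at height $-1$ and ending at height $k$. Those walks have net rise $k+1$ and so carry $i-1$ down steps, giving $\binom{k+2i-1}{i-1}$ forbidden words and hence the stated difference. Summing the weighted counts $\binom{n-1}{k+2i-1}\bigl(\binom{k+2i-1}{i}-\binom{k+2i-1}{i-1}\bigr)a^{k+i-1}b^{n-k-2i}c^i$ over $i$ then yields the theorem.

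I expect the reflection step to be the main obstacle: one must check that the first-passage reflection really is a well-defined bijection here, in particular that inserting or deleting flat steps does not interfere with the ``never touch $0$'' condition (it does not, since a flat step preserves height, so the set of heights visited is that of the up/down subword) and that the reflected start at $-1$ enumerates each forbidden word exactly once. The binomial bookkeeping and the matching of the weights are routine once this ballot-type count is secured.
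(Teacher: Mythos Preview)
Your argument is correct and follows essentially the same route as the paper: both proofs separate the flat steps from the up/down steps, use the binomial $\binom{n-1}{k+2i-1}$ to interleave them, and then count the constrained up/down (Dyck-triangle) words. The only cosmetic difference is that the paper states the Dyck-triangle count $\binom{k+2i-1}{i}-\binom{k+2i-1}{i-1}$ and checks it by substitution into the one-step recurrence $D_{n,k}=D_{n-1,k-1}+D_{n-1,k+1}$, whereas you derive it directly via the reflection principle.
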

\begin{proof}
We start with Dyck paths; paths from $(0,0)$ to $(0,2n)$ that do not go below the horizontal axis involving steps of type up, $(1,1)$ or down, $(1,-1)$, such as portrayed in Figure \ref{Lattice}C. The number of such paths is known to be counted by the Catalan numbers $C_{n}=\frac{1}{n+1}{2n \choose n}$ \cite{Bailey}. A Dyck triangle is the collection of paths from $(0,0)$ to $(n,k)$ that do not go below the horizontal axis and involve up and down steps. Note that $n$ and $k$ must have the same parity. If $D_{n,k}$ count these paths then conditioning over one step we find $D_{n,k}=D_{n-1,k-1}+D_{n-1,k+1}$. It is straightforward to show by substitution that $D_{n,k}=\frac{k+1}{n+1}{n+1 \choose \frac{1}{2}(n-k)}$ satisfies this recurrence, along with boundary condition $D_{2n,0}=C_{n}$. This formula differs to other counts involving Dyck triangles because this lattice formulation of the triangle is rotated through $\frac{\pi}{4}$ to the usual presentation \cite{Shapiro}.

We now turn to Motzkin paths, which are the same as Dyck paths except we now allow an additional horizontal step $(1,0)$. Now any Motzkin path from $(0,0)$ to $(n,k)$ can be partitioned into a Dyck path from $(0,0)$ to $(k+2i,k)$ involving $k+i$ up steps and $i$ down steps, along with $n-k-2i$ horizontal steps, where $i \in {0,1,...,\lfloor \frac{n-k}{2} \rfloor}$. For any $i$, the probability of such a path arising is $a^{k+i}b^{n-k-2i}c^i$. Then noting that we have ${n \choose k+2i}$ permutations of the horizontal steps with the Dyck path steps, we sum across the possibilities to get the following probability.

$m_{n,k}=\sum\limits_{i=0}^{\lfloor \frac{n-k}{2} \rfloor}D_{k+2i,k}{n \choose k+2i}a^{k+i}b^{n-k-2i}c^i=\sum\limits_{i=0}^{\lfloor \frac{n-k}{2} \rfloor}{n \choose k+2i}({k+2i \choose i}-{k+2i \choose i-1})a^{k+i}b^{n-k-2i}c^i$

Finally we note that we are going from position $(1,1)$ to $(n,k)$ without touching the horizontal axis, so substituting $n \rightarrow n-1$ and $k \rightarrow k-1$ gives the required result; $P_{n,k}=m_{n-1,k-1}$.
\end{proof}

This result allows us to look at the case where all $n$ cells in the colony are fully differentiated (all are (D) cells) and there is not further potential for growth. In our Motzkin triangle analogy, this would be a Motzkin path (with an additional final down step) from $(1,1)$ to $(n,0)$, such as the path in Figure \ref{Lattice}E. All colonies that have a corresponding path touching the horizontal axis thus have no proliferating cells. We have an absorbing barrier, also known as the gambler’s ruin problem. 

\begin{corollary}
\label{Ruin}
The probability $P_{n,0}$ is given by weighted Motzkin numbers:

$P_{n,0}=\sum\limits_{i=0}^{\lfloor \frac{n-2}{2} \rfloor}{n-2 \choose 2i}({2i \choose i}-{2i \choose i-1})a^{i}b^{n-2-2i}c^{i+1}=\sum\limits_{i=0}^{\lfloor \frac{n-2}{2} \rfloor}{n-2 \choose 2i}C_{2i}a^{i}b^{n-2-2i}c^{i+1}$

\end{corollary}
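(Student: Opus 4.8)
The plan is to reduce the $k=0$ case to the intermediate Motzkin-path count $m_{n,k}$ already established in the proof of the Theorem, by exploiting that a fully differentiated clone corresponds to a \emph{first passage} of the path to the horizontal axis. A clone reaches size $n$ with no dividing cells precisely when the associated path runs from $(1,1)$ to $(n,0)$ while staying strictly above the axis at every intermediate position $1,\dots,n-1$. Since each step changes the height by at most one, the height at position $n-1$ must equal exactly $1$ and the final step must be a down step (a DD division, weight $c$). So I would first peel off this forced last step.

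After removing the trailing down step, what remains is a path from $(1,1)$ to $(n-1,1)$ that stays at height $\ge 1$ throughout. Shifting by $(-1,-1)$ turns it into an ordinary Motzkin path from $(0,0)$ to $(n-2,0)$, i.e.\ one staying at height $\ge 0$ and returning to the axis. The total weight of such paths is exactly $m_{n-2,0}$, giving the recursion $P_{n,0}=c\,m_{n-2,0}$. The main point to get right is this first-passage bookkeeping: Motzkin paths ending at height $0$ are automatically nonnegative, so appending a single trailing down step is a bijection from that class onto the first-passage paths, and no overcounting occurs. This also sidesteps the restriction $k>1$ in the Theorem, since I invoke the quantity $m_{n,k}$, which the Theorem's proof constructs for all $k\ge 0$.

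It then remains to evaluate $m_{n-2,0}$ by setting $k=0$ in the formula
\[
m_{n,k}=\sum_{i=0}^{\lfloor\frac{n-k}{2}\rfloor}\binom{n}{k+2i}\left(\binom{k+2i}{i}-\binom{k+2i}{i-1}\right)a^{k+i}b^{n-k-2i}c^{i}.
\]
With $k=0$ the upper limit becomes $\lfloor\frac{n-2}{2}\rfloor$ and the bracketed factor becomes $\binom{2i}{i}-\binom{2i}{i-1}=D_{2i,0}$. Multiplying through by $c$ yields the first stated expression for $P_{n,0}$. Finally I would identify $\binom{2i}{i}-\binom{2i}{i-1}=\frac{1}{i+1}\binom{2i}{i}$, the Catalan number appearing in the boundary condition $D_{2n,0}=C_n$ recorded in the Theorem's proof, which recasts $P_{n,0}$ in the weighted-Motzkin-number form and makes explicit that these are the Motzkin numbers weighting the returns to the axis. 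The algebra here is routine; the only genuine care is in the first-passage decomposition of the preceding paragraph, which is the crux of the argument.
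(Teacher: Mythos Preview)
Your approach is essentially identical to the paper's: both peel off the forced final DD step (weight $c$) and reduce to the already-established path count at height $1$ just before ruin, the only cosmetic difference being that the paper writes this as $P_{n,0}=c\,P_{n-1,1}$ whereas you write the equivalent $P_{n,0}=c\,m_{n-2,0}$ (these agree via $P_{n,k}=m_{n-1,k-1}$). Your first-passage justification and the Catalan identification $\binom{2i}{i}-\binom{2i}{i-1}=\frac{1}{i+1}\binom{2i}{i}=C_i$ are correct; note that the $C_{2i}$ in the displayed statement is a typo for $C_i$.
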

\begin{proof}
For the case where there are no dividing cells remaining in the colony, the colony must transit through a penultimate stage $(n-1,1)$ with only one dividing cell remaining, and undergo an enforced final (DD) division. Multiplying the formula for $P_{n-1,1}$ by $c$ gives the required result.
\end{proof}

Both of these results have corresponding generating functions as described in the following result:

\begin{theorem}
\label{GenFunc}
The generating function $F(x,t)=\sum\limits_{n=0}^\infty\sum\limits_{k=0}^nP_{n,k}x^kt^n$ is given by:

$F(x,t)=\frac{1-bt-\sqrt{(bt-1)^2-4act^2}}{2a}+\frac{x(2tax-1+bt+\sqrt{((bt-1)^2-4act^2))}}{2a(x-tc-tbx-tax^2)}$

\end{theorem}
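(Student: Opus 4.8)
The plan is to sidestep the explicit binomial sums and instead read $F$ off a single functional equation obtained from the one-step structure of the weighted walk, splitting $F=F_0+F_+$ where $F_0(t)=\sum_n P_{n,0}t^n$ records the absorbed (fully differentiated) colonies and $F_+(x,t)=\sum_n\sum_{k\ge 1}P_{n,k}x^kt^n$ records the live ones. The process starts at $P_{1,1}=1$, and conditioning each live state on its predecessor gives $P_{n,k}=aP_{n-1,k-1}+bP_{n-1,k}+cP_{n-1,k+1}$ for $k\ge 2$; at the boundary the absorbing axis deletes the up-step into height one, leaving $P_{n,1}=bP_{n-1,1}+cP_{n-1,2}$, while Corollary~\ref{Ruin} already supplies $P_{n,0}=cP_{n-1,1}$.

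Next I would turn these into one equation. Multiplying by $x^kt^n$ and summing, the up-, flat- and down-steps contribute $axt\,F_+$, $bt\,F_+$ and $\frac{ct}{x}(F_+-xH)$, where $H(t)=\sum_n P_{n,1}t^n$ is the a priori unknown series for single-progenitor colonies; it enters exactly because the down-step at height one is diverted into absorption. Collecting terms and clearing the $x$ from the denominator gives the kernel equation
\[
\bigl(x-tc-tbx-tax^2\bigr)F_+ = xt\,(x-cH),
\]
whose left factor is precisely the denominator in the claimed formula; independently $F_0=ctH$ is immediate from $P_{n,0}=cP_{n-1,1}$.

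The crux is to determine $H$, for which I would invoke the kernel method. The denominator vanishes where $tax^2-(1-bt)x+tc=0$, and of its two roots exactly one, $x_-(t)=\frac{(1-bt)-\sqrt{(1-bt)^2-4act^2}}{2ta}$, is a power series in $t$ with $x_-(t)=ct+O(t^2)\to0$ as $t\to0$. Since $F_+$ is a genuine power series (its $t^n$-coefficient is a polynomial in $x$ of degree at most $n$), it stays finite at $x=x_-(t)$, so the right-hand side $xt(x-cH)$ must vanish there; as $x_-\ne0$ for $t\ne0$ this forces $cH(t)=x_-(t)$. This analyticity step is the main obstacle: one must justify substituting the small root and argue it is the correct one — the other root behaves like $1/t$ and would yield a non-series $H$ — with the consistency check that $x_-/c=t+O(t^2)$ correctly reproduces $P_{1,1}=1$.

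Finally I would substitute $cH=x_-$ back. For the absorbed part, $F_0=ctH=t\,x_-=\frac{1-bt-\sqrt{(bt-1)^2-4act^2}}{2a}$, which is the first summand. For the live part, the elementary simplification $t(x-cH)=t(x-x_-)=\frac{2tax-1+bt+\sqrt{(bt-1)^2-4act^2}}{2a}$, divided by the kernel, reproduces the second summand verbatim. Adding $F_0+F_+$ yields the stated $F(x,t)$, so beyond this one simplification no manipulation of the explicit $P_{n,k}$ is required.
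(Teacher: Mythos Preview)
Your argument is correct. The kernel equation $(x-tc-tbx-tax^2)F_+=xt(x-cH)$ is exactly right (the auxiliary series $H_2=\sum_nP_{n,2}t^n$ that appears when you separate the $k=1$ and $k\ge2$ recurrences cancels, as your heuristic suggests), the small root $x_-(t)=ct+O(t^2)$ is the one for which the substitution into the bivariate power series is legitimate, and the final identification of the two summands is a straightforward check.

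Your route differs from the paper's. The paper shifts to the auxiliary Motzkin triangle generating function $m(x,t)$ (paths from the origin, allowed to touch the axis), obtains the same kernel equation $m(x,t)=\dfrac{x-tc\,m(0,t)}{x-tc-tbx-tax^2}$ with the unknown boundary term $m(0,t)$, and then determines $m(0,t)$ by a \emph{combinatorial} first-step decomposition of Motzkin excursions, leading to the quadratic $m(0,t)=1+bt\,m(0,t)+act^2\,m(0,t)^2$; only afterwards is $F$ assembled as $F=t^2c\,m(0,t)+xt\,m(x,t)$. You instead work directly with $P_{n,k}$, split $F=F_0+F_+$, and pin down the unknown boundary series $H$ by the \emph{analytic} kernel method (cancelling the small root). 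The two unknowns are related by $H(t)=t\,m(0,t)$, so the methods are consistent. The paper's decomposition ties the computation to classical Motzkin identities and makes the Catalan/Motzkin structure explicit; your kernel argument is more streamlined, avoids the shift to $m(x,t)$ and the separate quadratic, and generalises readily to other one-boundary walks.
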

\begin{proof}
First we construct a weighted generating function for paths in a standard Motzkin triangle, $m(x,t)=\sum\limits_{n=0}^\infty\sum\limits_{k=0}^nm_{n,k}x^kt^n$, where $m_{n,k}$ are the Motzkin numbers weighted by the elements $a$, $b$ and $c$ associated with each path from $(0,0)$ to $(n,k)$. Now conditioning over a single step gives the following recurrence; $m_{n+1,k}=cm_{n,k+1}+bm_{n,k}+am_{n,k-1}$. Then substituting this into the generating function yields the following:

$\begin{array}{l  l  l}
m(x,t) & = & 1+\sum\limits_{n=1}^\infty\sum\limits_{k=0}^nm_{n,k}x^kt^n =  1+\sum\limits_{n'=0}^\infty\sum\limits_{k=0}^{n'+1}m_{n'+1,k}x^kt^{n'+1}\\
& = & 1 + \sum\limits_{n'=0}^\infty\sum\limits_{k=0}^{n'+1}(cm_{n',k+1}+bm_{n',k}+am_{n',k-1})x^kt^{n'+1} \\
& = & 1 + \frac{tc}{x}(m-m(0,t)) + tbm +taxm \\
\end{array}$

Which gives us:

$m(x,t) = \frac{x-tcm(0,t)}{x-tc-tbx-tax^2}$.

To find $m(0,t)$ we note that a Motzkin path from $(0,0)$ to $(n,0)$ involves either a first horizontal step and a weighted Motzkin path one step smaller, or an up step, a motzkin path, a down step, and a Motzkin path. This is summarised as $m_{n+1}=bm_n+ac\sum\limits_{k=0}^{n-1}m_km_{n-1-k}$, where $m_n$ is the weighted sum of these paths. Now substituting this recurrence into the generating function $m(0,t) = \sum\limits_{k=0}^\infty m_kt^k=1+t\sum\limits_{k=0}^\infty m_{k+1}t^k$ yields $m(0,t) = 1 + btm(0,t) + t^2acm(0,t)^2$. The solution satisfying $m(0,0)=1$ is then:

$m(0,t) = \frac{1-bt-\sqrt{(bt-1)^2-4act^2}}{2act^2}$

Substituting this above then yields the general form:

$m(x,t)= \frac{2tax-1+bt+\sqrt{(bt-1)^2-4act^2}}{2at(x-tc-tbc-tax^2)}$ 

The result is obtained by noting that the generating function for $P_{n,k}$ corresponds to paths from $(1,1)$ to $(n,k)$. Furthermore, a path from $(1,1)$ to $(n,0)$ involves a weighted Motzkin path of length $n-2$, followed by a down step, and we find that:

$F(x,t) = \sum\limits_{n=0}^\infty P_{n,0}t^n+\sum\limits_{n,k \ge 1}P_{n,k}x^kt^n = t^2c\sum\limits_{n=0}^\infty m_{n,0}t^n+xt\sum\limits_{n,k \ge 0}m_{n,k}x^kt^n=t^2cm(0,t)+xtm(x,t)$.

Substituting the weighted Motzkin generating functions results in the desired form.
\end{proof}


\subsection{Gambler's Ruin}

We are now in a position to describe the probability of ruin, or equivalently the probability of a fully differentiated clone, where we have the following result:

\begin{corollary}
\label{RuinGF}
The generating function $G(t)=\sum\limits_{n=0}^\infty P_{n,0}t^n$ is given by:

$G(t)=\frac{1-bt-\sqrt{(bt-1)^2-4act^2}}{2a}$

This results in an alternative expression for the probability $P_{n,0}$ that a clone of size $n$ is fully differentiated:

$P_{n,0} = \frac{-(\frac{1}{2})^n}{2a}\sum_{r=0}^{n}{2(n-r)\choose{n-r}}{{2r}\choose{r}}\frac{(b+2\sqrt{ac})^{n-r}(b-2\sqrt{ac})^r}{(2(n-r)-1)(2r-1)}$

Furthermore, we find that the probability $P_{0}$ that a single proliferating cell will become fully differentiated is given by:

$P_{0} = \left\{
\begin{array}{l l}
1 & a>c \\
\frac{a}{c} & a<c
\end{array}
\right.$

\end{corollary}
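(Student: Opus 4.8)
The plan is to dispatch the three assertions in sequence, since the first furnishes the object the other two analyse. For the generating function $G(t)$, I would note that $G(t)=\sum_n P_{n,0}t^n$ is exactly the $k=0$ part of the bivariate series of Theorem \ref{GenFunc}, i.e. $G(t)=F(0,t)$. Setting $x=0$ in the decomposition $F(x,t)=t^2 c\,m(0,t)+xt\,m(x,t)$ established in that proof annihilates the second term and leaves $G(t)=t^2 c\,m(0,t)$. Substituting the closed form $m(0,t)=\bigl(1-bt-\sqrt{(bt-1)^2-4act^2}\bigr)/(2act^2)$ and cancelling the factor $t^2 c$ returns the stated $G(t)$ at once; this step is purely mechanical.

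For the explicit coefficient formula I would extract $[t^n]G(t)$ by expanding the radical. The key algebraic observation is that the discriminant factors over the reals as $(bt-1)^2-4act^2=\bigl(1-(b+2\sqrt{ac})t\bigr)\bigl(1-(b-2\sqrt{ac})t\bigr)$, verified by expanding and using $(b+2\sqrt{ac})(b-2\sqrt{ac})=b^2-4ac$. Each factor's square root is then expanded via the generalized binomial series, for which the clean form $[z^m]\sqrt{1-z}=-\binom{2m}{m}/\bigl((2m-1)4^m\bigr)$ (a rewriting of the Catalan generating function) is most convenient. Taking the Cauchy product of the two series, collecting $t^n$ through $p+q=n$ with $p=n-r$, $q=r$, and multiplying by $-1/(2a)$ (the $1-bt$ in the numerator only perturbs $n=0,1$) produces the double-binomial sum. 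I would track the power of $2$ in the prefactor with care, since each radical contributes a $4^{-p}$ and a $4^{-q}$, so the Cauchy product carries $4^{-n}$; this bookkeeping is the one place a spurious factor of $2$ can slip in.

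Finally, for the absorption probability I would use $P_0=\sum_n P_{n,0}=G(1)$, which is legitimate because the coefficients are nonnegative and, by AM--GM, the radius of convergence $1/(b+2\sqrt{ac})\ge 1/(a+b+c)=1$, so $G$ is analytic at $t=1$ when $a\ne c$ and continuous there by Abel's theorem when $a=c$. Invoking $a+b+c=1$ gives $b-1=-(a+c)$, so the discriminant collapses to $(b-1)^2-4ac=(a+c)^2-4ac=(a-c)^2$. The main obstacle is precisely the choice of branch: tracking the branch that is positive near $t=0$ (where $\sqrt{(bt-1)^2-4act^2}=1-bt+O(t^2)$) forces $\sqrt{(a-c)^2}=+|a-c|$ at $t=1$, whence $G(1)=\bigl((a+c)-|a-c|\bigr)/(2a)$. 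Splitting on the sign of $a-c$, I expect this to evaluate to $c/a$ when $a>c$ and to $1$ when $a<c$, that is $\min(1,c/a)$; as a sanity check this coincides with the classical Galton--Watson extinction probability, the smallest nonnegative root of $as^2+bs+c=s$ for the offspring law placing mass $a,b,c$ on $\{2,1,0\}$ proliferating descendants. Reconciling the orientation of this case split with the form stated in the corollary is the delicate point I would verify most carefully before committing to the final write-up.
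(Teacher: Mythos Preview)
Your approach is essentially identical to the paper's: substitute $x=0$ into $F(x,t)$ (equivalently into $F=t^{2}c\,m(0,t)+xt\,m(x,t)$) to obtain $G(t)$; factor the discriminant as $(1-(b+2\sqrt{ac})t)(1-(b-2\sqrt{ac})t)$ and take the Cauchy product of the two binomial expansions of the square root; and evaluate $G(1)$ using $1-b=a+c$ so that the radicand becomes $(a-c)^{2}$.

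Your two caveats are not excess caution but genuine corrections to the statement as printed. First, each factor $\sqrt{1-\alpha t}$ contributes $-\binom{2m}{m}/\bigl((2m-1)4^{m}\bigr)$ at order $m$, so the Cauchy product carries $4^{-n}$, not $2^{-n}$; the prefactor in the displayed formula for $P_{n,0}$ should be $-(\tfrac{1}{4})^{n}/(2a)$. A check at $n=2$ confirms this: the corrected sum gives $P_{2,0}=c$, while the printed prefactor gives $4c$. Second, your reading of the branch and the Galton--Watson cross-check are correct: $G(1)=\bigl((a+c)-|a-c|\bigr)/(2a)$ equals $c/a$ when $a>c$ and $1$ when $a<c$, i.e.\ the two cases in the corollary are transposed (and the subcritical value should be $c/a$, not $a/c$). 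So proceed exactly as you outline, but record the result as $P_{0}=\min(1,c/a)$ rather than attempting to match the printed orientation.
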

\begin{proof}
To get the generating function $G(t)$, we simply substitute $x=0$ into $F(x,t)$ from Theorem \ref{GenFunc}.
To get the alternative expression for the probabilities $P_{n,0}$ note that we can write $G(t)$ as:

$G(t) = \frac{1}{2a}[1-bt-(1-(b+2\sqrt{ac})t)^{\frac{1}{2}}(1-(b-2\sqrt{ac})t)^{\frac{1}{2}}]$

A double binomial expansion gives us:

$G(t) = \frac{1}{2a}[1-bt-\sum_{j=0}^{\infty}\sum_{k=0}^{\infty}{{2j}\choose{j}}{{2k}\choose{k}}\frac{(\frac{b+2\sqrt{ac}}{2})^j(\frac{b-2\sqrt{ac}}{2})^k}{(2j-1)(2k-1)}t^{j+k}]$

The constant and linear terms cancel and a reordering of the summation to collect powers of $t$ leaves us with the required expression.

Lastly we note that $G(1)=\sum\limits_{n=0}^{\infty}P_{n,0}$ and so substituting $t=1$ into the generating function gives us:

$G(1) = \frac{1}{2a}(1-b-\sqrt{(b-1)^2-4ac} = \frac{1}{2a}(a+c-|a-c|)$

where we have used $1-b = a+c$. Separately considering the cases $a>c$ and $a<c$ gives the required results.
\end{proof}


\subsection{Estimating Differentiation Probabilities}

\begin{wrapfigure}{r}{0.5\textwidth}
\centering
\includegraphics[height=90mm]{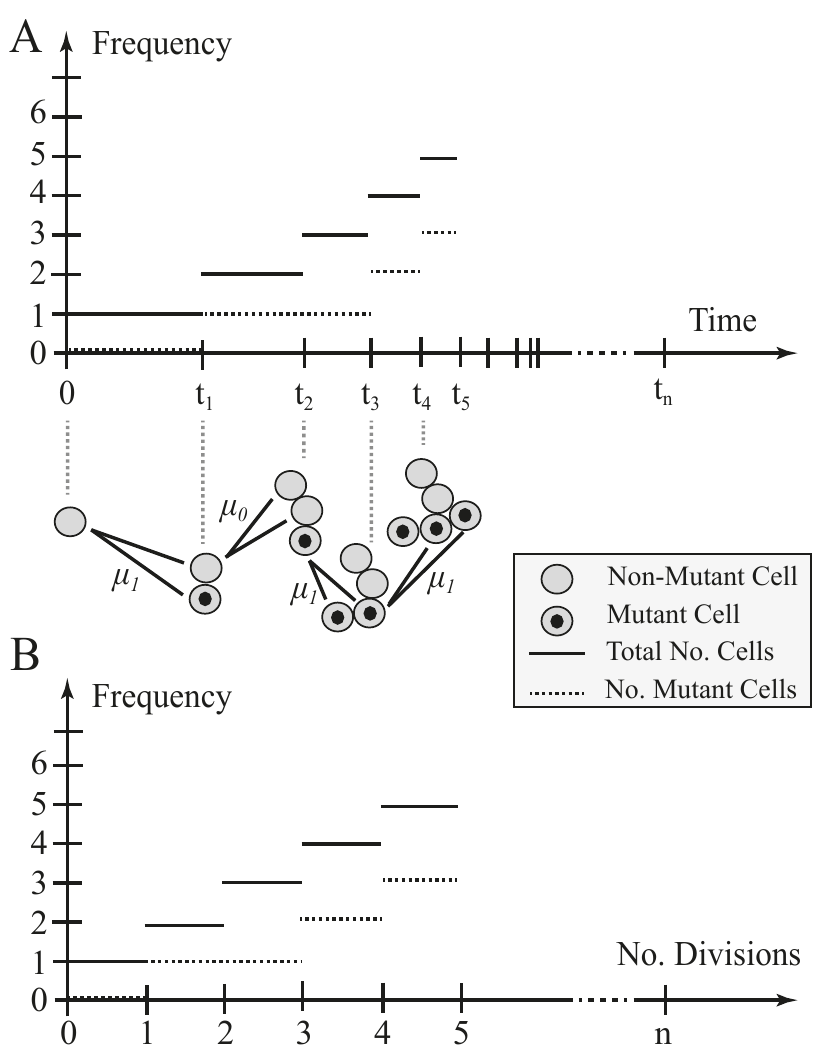}
\caption{A branching process of non-mutated and mutated cells. A single dividing cell is followed in time with the height of the solid line indicating total number of cells, and the height of the dashed line indicating number of mutant cells. In (A), plotted against time, we see the rate of cell division is proportional to the total number of cells, resulting in exponential growth. In (B), plotted against the number of divisions, we see the number of mutant cells only depends upon the number of mutant cell divisions, not their timing.}
\label{MutationDivisionC}
\end{wrapfigure}

We are now in a position to estimate the probabilities $a$, $b$ and $c$ of getting the different daughter cell combinations of (PP), (PD) or (DD), even when (P) and (D) cells are visually indistinguishable. Clone size distributions in a range of homeostatic epithelia demonstrate that dividing progenitor cells have (PP) outcomes in similar proportions to (DD) outcomes (or a=c) \cite{Jones}. Colonies arising from such populations will eventually become fully differentiated and stop growing, as represented in the bottom row of Figure \ref{Lattice}A. Therefore, at late time points of observation all colonies of cells with few cell numbers will be formed exclusively of non-dividing cells, as any colonies with dividing cells will continue to expand in cell number. Thus repeated measurements of small clone sizes, $n_c$, of fully differentiated non-dividing colonies of size $n$ can readily be counted. We can then compare these counts to the probabilities $\{c,bc,c(b^2+ac),...\}=\{P_{n,0}\}_n$ of either Corollary \ref{RuinGF} or \ref{Ruin}, and hence determine $a$, $b$ and $c$ using maximum likelihood. Small clone sizes form the bulk of clones seen in population distributions, and therefore can provide robust quantifiable results.

It is also important to highlight, that this is not affected by the presence of additional cell populations which have a branching birth process alone (putative stem cell populations). The clones formed by such populations will be much larger, continuing to expand with time, so can be readily identified and excluded. 


\subsection{Stochastic Processes Approach}

Finally, we remark that a lot of the derivations using Motzkin paths can also be replaced with approaches from stochastic processes. We highlight this with an alternative derivation of the gambler's ruin generating function of Corollary \ref{RuinGF} in the Appendix.

\section{An Exact Luria Delbr\"{u}ck Distribution}

We now investigate the mutation process of a growing clone of cells. Here, we assume no death process is involved, and initially that the mutation provides no additional survival advantage. In all that follows $k=m+n$ is the number of cells, where $m$ and $n$ count the number of mutants and non-mutants, respectively. 

Again we start with a single dividing cell. An example of this can be seen in Figure \ref{MutationDivisionC}A. The cells are dividing randomly at a rate $\beta$ according to the following Markovian branching (Yule-Furry) process. When any non-mutant cell divides we assume a mutant cell arises with probability $\mu_1$, such as the first division of Figure \ref{MutationDivisionC}A at time $t_1$. Conversely, we may obtain two non-mutants with probability $\mu_0 = 1-\mu_1$, such as in the second division portayed at time $t_2$. Finally, any dividing mutant produces two mutant daughters with probability 1, as displayed at times $t_3$ and $t_4$. We ignore any back mutation or loss of mutation.

As the colony grows, the rate of division, $\beta k$, increases in proportion to the number of cells present, $k$. If $t_k$ is the time of the $k^{th}$ division, the mean time intervals $t_{k+1}-t_k$ correspondingly decrease as we get exponential growth. Note that at time $t_k$ the colony increases in size (by one cell) to $k+1$ cells. It is this single dividing cell that has the opportunity to effect the number of mutations at this point; this is independent of the either the time $t_k$ at which this takes place, or the time $t_{k+1}-t_k$ between divisions.  

In Figure \ref{MutationDivisionC}B we see the mutation process as a discrete process on the number of divisions that have taken place. We assume for the moment that mutant and non-mutant cells divide at the same rate in a Markovian manner. All cells are thus equally likely to divide at any point in time. If we have $n$ non-mutant cells and $m$ mutant cells, we then find that a mutant will divide with probability $\frac{m}{m+n}$ resulting in $m+1$ mutants and $m+n+1$ cells. Conversely, a non-mutant divides with probability $\frac{n}{m+n}$ resulting in $m+n+1$ cells. This non-mutant will mutate with probability $\mu_1$ resulting in $m+1$ mutants, otherwise we will still have $m$ mutants, with probability $\mu_0$. This observation leads to the following correspondence.

\begin{theorem}
If $p_m^{(k)}$ denotes the probability of having $m$ mutant cells present when the population is of size $k$, we have the following recurrence, which is initialized with $p_0^{(1)}=1$.

$p_m^{(k)}=(\frac{m-1}{k-1}+\frac{k-m}{k-1}\mu_1)p_{m-1}^{(k-1)}+(\frac{k-1-m}{k-1}\mu_0)p_m^{(k-1)}$
\label{ProbRec}
\end{theorem} 
\begin{proof}
This result is a statement of conditional probability. There are two ways we can obtain $m$ mutants amongst $k$ cells, depending upon the mutation status of the $k-1$ cells prior to the previous cell division. If we have $m-1$ mutant cells out of $k-1$ cells in total, then to obtain $m$ mutants in $k$ cells we either select a mutant to divide with probability $\frac{m-1}{k-1}$, or pick a non-mutant to divide and generate a new mutation with probability $\frac{k-m}{k-1}\mu_1$. Alternatively, if we already have $m$ mutants from $k-1$ cells, then we require that the next division be a non-mutant cell that doesn't mutate on division, with probability $\frac{k-1-m}{k-1}\mu_0$.
\end{proof}

Note that we have reduced the mutation process to a discrete heterogeneous Markovian random walk starting from $(1,0)$ where we have either a horizontal step $(1,0)$ with probability $\frac{k-m}{k}\mu_0$, or the step $(1,1)$ with probability $\frac{m}{k}+\frac{k-m}{k}\mu_1$. We have the following general form for the $k$ division distribution of mutants, $p_m^{(k)}$.

\begin{theorem}\hspace*{\fill}
\\ 

$p_m^{(k)} = \mu_0^{k-1}\sum\limits_{\{1 \le i_1 < i_2 < \hdots <i_m \le k-1\}}\prod\limits_{j=1}^m\frac{(j-1)+i_j\frac{\mu_1}{\mu_0}}{k-j}$
\label{MainResult}
\end{theorem}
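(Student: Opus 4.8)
The plan is to read the closed form directly off the path structure of the heterogeneous random walk described just above the statement, rather than to iterate the recurrence blindly. Starting from $(1,0)$ and ending at population $k$ with $m$ mutants requires exactly $k-1$ divisions, of which precisely $m$ are \emph{up} steps (the mutant count increments) and $k-1-m$ are \emph{flat} steps. A path is therefore completely specified by the increasing sequence $1\le i_1<i_2<\cdots<i_m\le k-1$ recording the population sizes at which the up steps occur, so that $p_m^{(k)}=\sum_{i_1<\cdots<i_m}\big(\text{product of the }k-1\text{ step probabilities along that path}\big)$. The whole proof then reduces to evaluating this product and checking it matches the summand.

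First I would compute the product of step probabilities for a fixed path. Every step taken when the population has size $s$ carries a denominator $s$, and since $s$ runs over $\{1,\dots,k-1\}$ exactly once, the denominators multiply to $(k-1)!$. For the $j$-th up step the population is $i_j$ and the current mutant count is $j-1$, so its numerator is $(j-1)+(i_j-(j-1))\mu_1$; using $\mu_0=1-\mu_1$ this simplifies to $(j-1)\mu_0+i_j\mu_1$, which is exactly $\mu_0$ times the numerator $(j-1)+i_j\frac{\mu_1}{\mu_0}$ appearing in the claim. Each flat step at population $s$ contributes a factor $\mu_0$ together with a numerator $s-c_s$, where $c_s$ is the number of up steps already taken; collecting the $k-1-m$ flat factors gives an overall $\mu_0^{k-1-m}$.

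The crux is the claim that the flat numerators always multiply to $(k-1-m)!$, independently of the chosen up positions. I would prove this by observing that for a flat position $s$ the quantity $s-c_s$ counts exactly the flat positions lying in $\{1,\dots,s\}$; hence if the flat positions are listed in increasing order $f_1<f_2<\cdots<f_{k-1-m}$, then $f_r-c_{f_r}=r$, and the product equals $1\cdot 2\cdots(k-1-m)=(k-1-m)!$. This is the one step where the combinatorics, rather than routine algebra, does the work, and it is the part I expect to need the most care in stating cleanly.

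Assembling the pieces, each path contributes $\mu_0^{k-1-m}(k-1-m)!\big/(k-1)!$ times $\prod_{j=1}^m\big((j-1)\mu_0+i_j\mu_1\big)$. Pulling one factor of $\mu_0$ out of each of the $m$ up numerators converts $\mu_0^{k-1-m}$ into $\mu_0^{k-1}$ and turns the numerators into $(j-1)+i_j\frac{\mu_1}{\mu_0}$, while $(k-1-m)!/(k-1)!=\prod_{j=1}^m(k-j)^{-1}$ supplies precisely the denominators of the claim. Summing over all increasing sequences $i_1<\cdots<i_m$ then yields the stated formula. As a check (and an alternative route avoiding the flat-product identity) one could instead induct on $k$ directly from Theorem \ref{ProbRec}, splitting the target sum according to whether the largest index equals $k-1$: the $\mu_0\,p_m^{(k-1)}$ term matches the subsets avoiding $k-1$ after rewriting $\prod(k-1-j)^{-1}=\frac{k-1}{k-1-m}\prod(k-j)^{-1}$, and the remaining term matches the subsets containing $k-1$ once $\mu_0=1-\mu_1$ is used to reconcile $(m-1)+(k-m)\mu_1$ with $(m-1)\mu_0+(k-1)\mu_1$. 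The base cases $m=0$ (giving $\mu_0^{k-1}$) and $k=1$ are immediate.
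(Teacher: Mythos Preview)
Your argument is correct and follows the same path-decomposition strategy as the paper: index paths by the positions $i_1<\cdots<i_m$ of the up steps, write down the product of the $k-1$ transition probabilities, and simplify. The only difference is cosmetic: the paper simplifies the flat-step factors by explicit telescoping within each block between consecutive up steps, whereas you replace that telescoping by the one-line observation that $f_r-c_{f_r}=r$, so the flat numerators are exactly $1,2,\dots,k-1-m$; both give the same $(k-1-m)!/(k-1)!=\prod_{j=1}^m(k-j)^{-1}$ and the proofs then conclude identically.
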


\begin{proof}
We start with a single cell, so the $(i-1)^{th}$ division results in $i$ cells. The required number of cells is $k$ so we have $k-1$ cell divisions in total to consider. We require $m$ mutants, so we need $m$ divisions that either generate a new mutant when a non-mutant divides, or involve a dividing mutant parent. We let the $i_j^{th}$ division be the one that increases the mutation count from $j-1$ to $j$. Then prior to this event we have $j-1$ mutant cells and $i_j$ cells. The probability that the next dividing cell is a mutant, or a non-mutant developing a mutation, is then $\frac{(j-1)+(i_j-(j-1))\mu_1}{i_j}$. For the remaining cell divisions we require a non-mutant to divide and not mutate. This will keep the mutant count constant. If this is the $i^{th}$ division, where $m$ mutants are present, this occurs with probability $\frac{i-m}{i}\mu_0$. For the first $i_1-1$ divisions there are no mutants present and this will simply be $\mu_0$. Then we find that:

$\begin{array}{l l l l}
p_m^{(k)}  &=\sum\limits_{\{1 \le i_1 < i_2 < \hdots <i_m \le k-1\}} & \mu_0^{i_1-1} & \cdot \mu_1 \cdot \frac{i_1}{i_1+1}\mu_0 \cdot \frac{i_1+1}{i_1+2}\mu_0 \cdot \hdots \cdot \frac{i_2-2}{i_2-1}\mu_0 \cdot\hdots\\
& & \hdots & \frac{1+(i_2-1)\mu_1}{i_2} \cdot \frac{i_2-1}{i_2+1}\mu_0 \cdot \hdots \cdot \frac{i_3-3}{i_3-1}\mu_0\cdot \hdots\\
& & \vdots & \\
& & \hdots & \frac{(m-1)+(i_{m-1}-(m-1))\mu_1}{i_{m-1}} \cdot \frac{i_{m-1}-m}{i_{m-1}+1}\mu_0 \cdot \hdots \cdot \frac{i_m-1-m}{i_m-1}\mu_0\\
\end{array}$

Most term then cancel to leave us with:

$p_m^{(k)} =\mu_0^{k-1}\sum\limits_{\{1 \le i_1 < i_2 < \hdots <i_m \le k-1\}}\prod\limits_{j=1}^m\frac{\mu_0^{-1}((j-1)+(i_j-(j-1))\mu_1)}{k-j}$

Noting that $\mu_0=1-\mu_1$ then gives the desired result.
\end{proof}

\begin{figure}[t!]
\centering
\includegraphics[height=42mm,width=150mm]{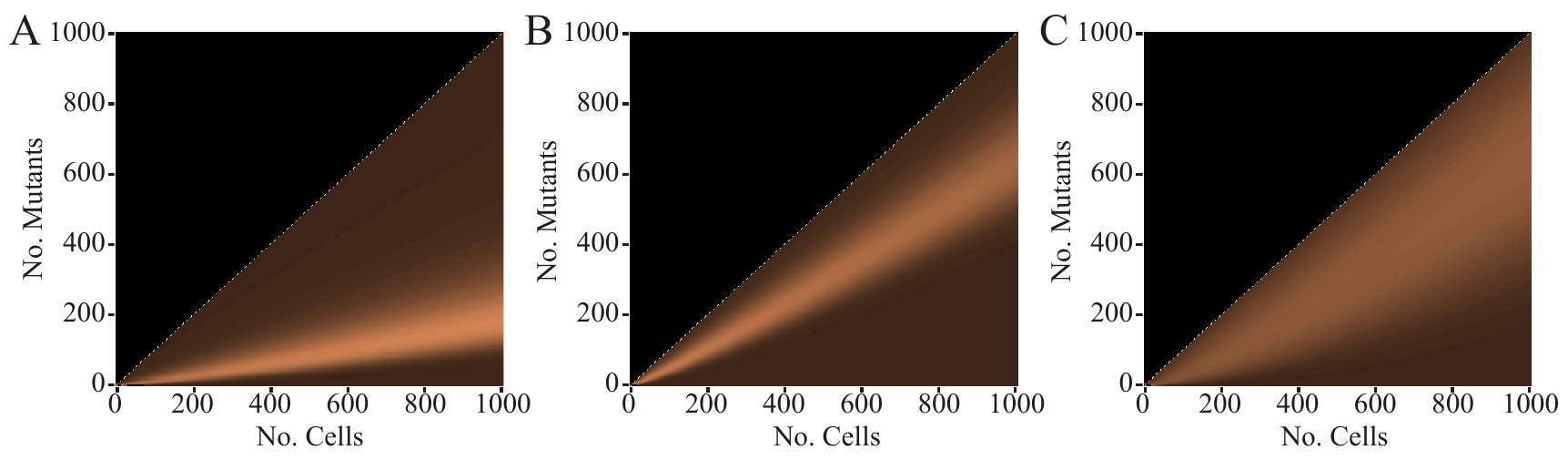}
\caption{The distributions for the number of mutants for a range of colony sizes up to 1000 cells. (A) is for $\mu_1=0.05, \rho=1$, (B) is for $\mu_1=0.2, \rho=1$, (C) is for $\mu_1=0.05, \rho=2$, where $\mu$ is the mutation rate and $\rho$ is the relative mutant fitness ($\frac{\mu_1}{\mu_0}$).}
\label{MutantDist}
\end{figure}

An example of the resulting distributions can be seen in Figure \ref{MutantDist}A,B. Although one can attempt to expand the summation and further reduce this formula, it quickly results in complicated expressions involving Faulhaber's formula for summing integer powers which do not seem to readily simplify. We have the following result concerning the moments.

\begin{theorem}
If $E_r^{(k)}$ represent the $r^{th}$ moment when $k$ cells are present, we have the following recursions for the first two moments, initialised with $E_1^{(1)}=0$ and $E_2^{(1)}=0$.

$E_1^{(k)}=\mu_1 + E_1^{(k-1)}(1+\frac{\mu_0}{k-1})$

$E_2^{(k)}=\mu_1 + E_1^{(k-1)}(2-\mu_0\frac{2k-3}{k-1})+E_2^{(k-1)}(1+\frac{2\mu_0}{k-1})$

The mean value can be written as follows.

$E_1^{(k)}=\mu_1 \sum_{u=1}^{k-1}\prod_{r=k-u+1}^{k-1}(1+\frac{\mu_0}{r})$
\end{theorem}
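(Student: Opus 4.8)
The plan is to obtain all three statements directly from the probability recurrence of Theorem \ref{ProbRec} by forming the appropriate $m$-weighted sums. For any order $r$ I would write $E_r^{(k)}=\sum_m m^r p_m^{(k)}$, substitute the recurrence for $p_m^{(k)}$, and split into the two resulting sums. In the sum containing $p_{m-1}^{(k-1)}$ I would reindex by $m\mapsto m+1$, so that every term is weighted against $p_m^{(k-1)}$; the factors $\tfrac{m-1}{k-1}$ and $\tfrac{k-m}{k-1}\mu_1$ become $\tfrac{m}{k-1}$ and $\tfrac{k-1-m}{k-1}\mu_1$, while the prefactor $m^r$ becomes $(m+1)^r$. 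Collecting both sums over a common $p_m^{(k-1)}$ then reduces the task to reading off the coefficients of $m^0,m^1,\dots$ in a polynomial in $m$, since $\sum_m m^j p_m^{(k-1)}=E_j^{(k-1)}$ and $E_0^{(k-1)}=1$.

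For the first moment ($r=1$) this is short: the bracketed coefficient of $p_m^{(k-1)}$ is $(m+1)\big(\tfrac{m}{k-1}+\tfrac{k-1-m}{k-1}\mu_1\big)+m\,\tfrac{k-1-m}{k-1}\mu_0$, and I expect the quadratic term to vanish (the two $m^2$ contributions carry coefficients $+\mu_0$ and $-\mu_0$). Using $\mu_0+\mu_1=1$, the remainder collapses to $m\big(1+\tfrac{\mu_0}{k-1}\big)+\mu_1$, which upon summing against $p_m^{(k-1)}$ gives exactly $E_1^{(k)}=\mu_1+\big(1+\tfrac{\mu_0}{k-1}\big)E_1^{(k-1)}$. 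The base case $E_1^{(1)}=0$ is immediate since a single seeded cell carries no mutants, and the analogue $E_2^{(1)}=0$ follows the same way.

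The second moment follows the identical template with $r=2$, so the reindexed sum carries a prefactor $(m+1)^2$ and the other carries $m^2$. The main obstacle is purely the bookkeeping here: I must check that the cubic terms cancel (the $m^3$ coefficients again being $+\tfrac{\mu_0}{k-1}$ and $-\tfrac{\mu_0}{k-1}$), so that $E_2^{(k)}$ closes on $E_0^{(k-1)},E_1^{(k-1)},E_2^{(k-1)}$ alone, and then collect the coefficients of $m^2$, $m^1$ and $m^0$ correctly. The $m^2$ coefficient should simplify via $\mu_0+\mu_1=1$ to $1+\tfrac{2\mu_0}{k-1}$, while the $m^1$ coefficient $2\mu_1+\tfrac{\mu_0}{k-1}$ must be rewritten, again using $\mu_1=1-\mu_0$, as $2-\mu_0\tfrac{2k-3}{k-1}$ to match the stated form; the constant term is $\mu_1$.

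Finally, for the closed form of $E_1^{(k)}$ I would solve its first-order linear recurrence by unrolling. Writing $a_r=1+\tfrac{\mu_0}{r-1}$ so that $E_1^{(k)}=\mu_1+a_k E_1^{(k-1)}$, repeated substitution from $E_1^{(1)}=0$ gives $\mu_1$ times a sum of trailing products $a_k a_{k-1}\cdots a_{k-u+2}$ for $u=1,\dots,k-1$, with $u=1$ contributing the empty product. Since the lowest index $a_{k-u+2}$ has denominator $k-u+1$, each trailing product equals $\prod_{r=k-u+1}^{k-1}\big(1+\tfrac{\mu_0}{r}\big)$, yielding the claimed $E_1^{(k)}=\mu_1\sum_{u=1}^{k-1}\prod_{r=k-u+1}^{k-1}\big(1+\tfrac{\mu_0}{r}\big)$. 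This could equally be verified by induction on $k$ against the recurrence just established.
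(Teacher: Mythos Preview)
Your proposal is correct and follows essentially the same approach as the paper: multiply the recurrence of Theorem~\ref{ProbRec} by $m^r$, sum over $m$, and reduce algebraically; then obtain the closed form for $E_1^{(k)}$ by unrolling the recursion (equivalently, induction). The paper's proof is terse, leaving the reindexing and cancellation as ``standard algebraic manipulation'', whereas you have spelled out those steps, including the cancellation of the top-degree $m$ term and the rewriting $2\mu_1+\tfrac{\mu_0}{k-1}=2-\mu_0\tfrac{2k-3}{k-1}$; these details are all sound.
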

\begin{proof}
The recurrences for the $r^{th}$ moment are obtained by multiplying the recurrence of Theorem \ref{ProbRec} by $m^r$ and summing over $m$. The resulting equations reduce to the stated expressions after some standard algebraic manipulation. To establish the formula for the mean, we use induction on the stated result with the first recursion. For the initial values, note that the process starts with one non-mutant cell, so the mean value and second moment must both be zero.
\end{proof}


\section{Incorporating Selection}

For certain mutations, there may be a subsequent growth advantage. This has been observed with p53 mutations in epidermal tissue, for example \cite{Jones}. Our assumption that all cells are equally likely to divide is no longer valid, with mutants dividing at a different rate to non-mutants. However, we find that the mutation process is only dependent upon the ratio of these rates, and we can condition on the number of cells and apply a similar technique to the previous section to obtain the following.

\begin{theorem} Let the division rate for non-mutants and mutants be $\beta_n$ and $\beta_m$, respectively, with ratio $\rho = \frac{\beta_m}{\beta_n}$. If $p_m^{(k)}$ represents the probability of having $m$ mutant cells when there are $k=m+n$ cells present, then we have the following recurrence, initialized with $p_0^{(1)}=1$.

$p_m^{(k)}=(\frac{\rho(m-1)}{\rho(m-1)+n}+\frac{n}{n+\rho(m-1)}\mu_1)p_{m-1}^{(k-1)}+(\frac{n-1}{n-1+\rho m}\mu_0)p_m^{(k-1)}$

\end{theorem}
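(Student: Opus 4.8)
The plan is to mirror the conditional-probability argument of Theorem \ref{ProbRec}, the only new ingredient being that unequal division rates alter which cell is selected to divide next. First I would pin down the time-independence claim that underlies the whole section. In the continuous-time process each of the $m$ mutants carries an independent exponential clock of rate $\beta_m$ and each of the $n$ non-mutants a clock of rate $\beta_n$; by memorylessness of the exponential (the competing-exponentials argument giving the embedded jump chain of the Markov branching process) the probability that the next division is performed by a mutant is $\frac{\beta_m m}{\beta_m m + \beta_n n} = \frac{\rho m}{\rho m + n}$ and that it is performed by a non-mutant is $\frac{n}{\rho m + n}$. The essential observation is that these depend only on the composition $(m,n)$ and the ratio $\rho$, not on $t_k$ or the inter-division gaps; this is exactly the time-independence drawn in Figure \ref{MutationDivisionC}B, and it licenses working entirely with the discrete chain indexed by population size $k$.

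With these selection-weighted step probabilities in hand, I would condition on the state immediately before the division that took the population from $k-1$ to $k$ cells, exactly as in Theorem \ref{ProbRec}. Two predecessor states are compatible with ending at $m$ mutants (and hence $n=k-m$ non-mutants). In the first, the predecessor has $m-1$ mutants and, since neither a mutant division nor a mutating non-mutant division alters the non-mutant count, the same $n=k-m$ non-mutants; reaching $m$ mutants then requires either a mutant to divide, probability $\frac{\rho(m-1)}{\rho(m-1)+n}$, or a non-mutant to divide and mutate, probability $\frac{n}{\rho(m-1)+n}\mu_1$. In the second, the predecessor has $m$ mutants and $n-1$ non-mutants (a non-mutant division producing two non-mutants raises the non-mutant count by one, so going backward drops it to $n-1$), and we need precisely that event: a non-mutant divides without mutating, probability $\frac{n-1}{(n-1)+\rho m}\mu_0$. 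Summing the two contributions weighted by $p_{m-1}^{(k-1)}$ and $p_m^{(k-1)}$ gives the stated recurrence, and the initialization $p_0^{(1)}=1$ records that the single founding cell is non-mutant.

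The part demanding the most care is the bookkeeping of the non-mutant count $n$ across the two predecessor states, which is the genuine source of the asymmetry between the denominators $\rho(m-1)+n$ and $(n-1)+\rho m$: in the branch that raises $m$ the non-mutant population is conserved, whereas in the branch that holds $m$ fixed the non-mutant population is the one that grew. Because $k=m+n$ is held fixed while one translates between the cell-number index $k$ and the pair $(m,n)$, an off-by-one slip here would propagate, so I would verify consistency by checking that $\rho=1$ (giving $\rho m+n=k$, $\rho(m-1)+n=k-1$, and $(n-1)+\rho m=k-1$) collapses the coefficients to $\frac{m-1}{k-1}+\frac{k-m}{k-1}\mu_1$ and $\frac{k-1-m}{k-1}\mu_0$, recovering Theorem \ref{ProbRec}. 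Conceptually the selective process reduces to the same rate-ratio-weighted discrete walk as the neutral case; the only genuinely new step is the competing-exponentials justification that the embedded chain sees the division rates only through $\rho$.
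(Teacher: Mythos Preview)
Your proposal is correct and follows essentially the same route as the paper: the competing-exponentials calculation giving $\frac{\rho m}{\rho m+n}$ for the probability the next dividing cell is a mutant, followed by conditioning on the two possible predecessor states $(m-1,n)$ and $(m,n-1)$ exactly as in Theorem~\ref{ProbRec}. Your added bookkeeping remarks and the $\rho=1$ consistency check are helpful but go slightly beyond what the paper records.
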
 
\begin{proof} 
If we suppose that the mutant cells are dividing at a rate $\beta_m$ and the non-mutant cells are dividing at a rate $\beta_n$. We further suppose we have $m$ and $n$ of these cells, respectively. Then if $T_m$ is the time until the next mutant cell divides, this has exponential distribution with mean $\frac{1}{\beta_m m}$. The time $T_n$ until the next normal cell divides is similarly exponential with mean time $\frac{1}{\beta_n n}$. Then if we know we have a cell division at some point in time, we would like to know which type of cell will divide. specifically we require:

$Pr(T_m>T_n)=\int_0^{\infty} \int_0^{t_m}\frac{1}{\beta_m m}e^{\beta_m m t_m} \frac{1}{\beta_m m}e^{\beta_m m t_m} dt_n dt_m=\frac{m\beta_m}{m\beta_m + n\beta_n}=\frac{\rho m}{\rho m + n}$

Thus we just have to weight the mutant count by the relative increase in division rate. In particular, if we have $m$ mutant cells and $n-1$ non-mutant cells, the probability that we have $m$ mutants and $n$ non-mutants after the next cell division requires a non-mutant to divide without a new mutation forming. This occurs with probability $\frac{n-1}{n-1+\rho m}\mu_0$. Similarly, if we have $m-1$ mutant cells and $n$ non-mutant cells, the probability that we have $m$ mutants and $n$ non-mutants after the next cell division requires a mutant to divide, or a non-mutant to divide with a new mutation forming. This occurs with probability $\frac{\rho(m-1)}{\rho(m-1)+n}+\frac{n}{n+\rho(m-1)}\mu_1$. The recurrence is a statement of conditional probability connecting these two observations.
\end{proof}

The recurrence can be used to derive formulae for $p_m^{(k)}$ and the moments. The method is identical to that of Theorem \ref{MainResult} and the details are left to the reader. An example of the distribution can be seen in Figure \ref{MutantDist}C, where we have mutation rate $\mu_1=0.05$ and relative fitness $\rho=2$. This gave a comparable distribution to Figure \ref{MutantDist}B, where the mutation rate is $\mu_1=0.20$ with neutral relative fitness $\rho=1$, although the variance is notably higher in Figure \ref{MutantDist}C.


\section{Distributions of Sub-Clones in Mutated Colonies} 

In the questions considered so far, we just have the binary status of mutated or non-mutated. This is generally the status of a gene, or a portion of a chromosome that may be of interest, but could also be the status of a single nucleotide of DNA, which number in the billions. DNA sequencing techniques now mean that individual mutations can now be distinguished by their position in the genome. For example in Figure \ref{Depth}A we see that five of six cells are mutant, arising from four mutations produced during three cell divisions ($\dag$), that combine into into four distinct clones. In Figure \ref{Depth}B we have the distribution of the number of cells for each mutation. This is a symbolic representation of the mutation and sequencing depth information obtained from modern experiments and points to other avenues of investigation. Firstly we would like to know the number of cells containing a randomly selected mutation. Secondly, we would like to know the number of clones. Thirdly, we would like to know the number of cells. Finally, we would like to know the number of distinct mutations in a randomly selected clone. We have the following results.

\begin{figure}[t!]
\centering
\includegraphics[height=45mm,width=140mm]{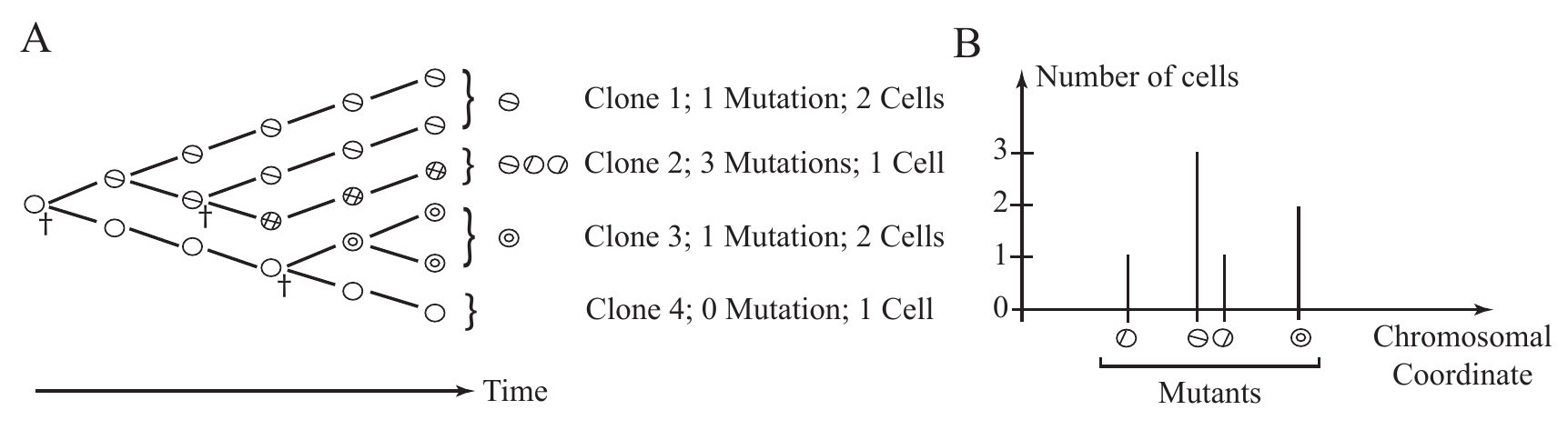}
\caption{(A) A representation of clonal mutant growth; six cells result from five cell divisions, three of which produce four mutations($\dag$), which cluster into four clones.  (B) represents the cellular count for each mutation against illustrative chromosomal co-ordinates.}
\label{Depth}
\end{figure}


\subsection{The Number of Cells Containing a Specific Mutation}

\begin{theorem}
If $p_r^{(k)}$ is the probability that a randomly selected mutation exists in $r$ cells in a colony of $k$ cells, we have:

$p_r^{(k)} = \sum\limits_{j=1}^{k-r+1}\frac{j-1}{(k-1)^2}\prod\limits_{m=1}^{r-1}(1-\frac{j-2}{k-m-1})$

\end{theorem}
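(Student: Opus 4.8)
The plan is to reduce $p_r^{(k)}$ to two independent ingredients: (i) the division at which the selected mutation is founded, and (ii) the subsequent growth of the clone it seeds, which I will show is governed by a two-colour P\'olya urn. Throughout I work in the neutral setting ($\rho=1$), so that, conditioned on a division occurring, the dividing cell is uniform among all cells currently present. This is what makes the clone-growth step exact and, as it will turn out, independent of $\mu_1$.

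First I would fix the founding event. The colony grows from $1$ to $k$ cells through exactly $k-1$ divisions, each creating one new cell, and a new mutation is carried by exactly one such new cell. Appealing to the exchangeability of the Yule genealogy, I would argue that the founding division of a randomly selected mutation is uniform over the $k-1$ divisions; equivalently, the single mutant cell first appears when the total population has some size $j$, and the index $j$ carries weight $\tfrac{1}{k-1}$. I would stress that this is precisely the step at which the dependence on $\mu_1$ drops out: once a mutation is known to be founded at a given division, the parameter $\mu_1$ plays no further role in the clone it seeds.

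Second, conditioned on founding at population size $j$ (one clone cell against $j-1$ non-clone cells), I would track the clone as the population grows to $k$. At each of the remaining $k-j$ divisions a uniformly chosen cell divides; if it lies in the clone then, since mutants beget mutants, the clone size increases by one, otherwise it is unchanged. This is exactly a two-colour P\'olya urn started from $1$ clone cell and $j-1$ non-clone cells, so the probability that the clone occupies exactly $r$ of the final $k$ cells follows from the standard urn computation. After collecting the rising-factorial factors this probability simplifies to $\tfrac{j-1}{k-1}\prod_{m=1}^{r-1}\bigl(1-\tfrac{j-2}{k-m-1}\bigr)$, with the empty product ($r=1$) read as $1$.

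Finally I would combine the two ingredients: multiply the urn probability by the founding weight $\tfrac{1}{k-1}$ and sum over admissible founding sizes. The term $j=1$ vanishes through the factor $j-1$, and any $j>k-r+1$ contributes nothing because a clone founded that late has fewer than $r-1$ divisions left and cannot reach size $r$; hence the sum truncates at $k-r+1$, yielding the displayed formula. The main obstacle, and the step I would treat most carefully, is justifying the uniform $\tfrac{1}{k-1}$ founding weight and the consequent disappearance of $\mu_1$: this rests on the exchangeability of which division hosts the mutation, and I would sanity-check it on a small case such as $k=3$ to confirm that a naive $1/N$-weighted selection among realised mutations is \emph{not} the intended notion. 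By comparison, the urn simplification in the second step is routine, requiring only careful bookkeeping to land on the stated product form.
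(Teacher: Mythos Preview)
Your approach matches the paper's almost exactly: both assign the founding division uniform weight $\tfrac{1}{k-1}$ over the $k-1$ divisions and then compute the conditional probability that the single tagged cell present among $j$ cells grows to occupy $r$ of the final $k$. The only difference is packaging --- the paper verifies the conditional law $q_r^{(j,k)}=(j-1)\,(k-j)_{r-1}/(k-1)_r$ by checking it against a one-step recurrence, whereas you recognise the growth step as a two-colour P\'olya urn and quote its standard formula, arriving at the same product displayed in the statement.
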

This differs slightly to the original problem considered by Luria and Delbr\"{u}ck in that instead of asking how many cells contain a mutation in a specific gene (or region), which may involve many different mutation events, we randomly sample a mutation from all mutations found in that region, and count the corresponding number of cells containing that mutation. We assume each mutation arises only once, which may not be true for large colonies or small genomes.

\begin{proof}
Now there are $k-1$ divisions that take place to give a sample size of $k$. Now if we randomly select a mutation it can arise during any of these divisions with equal probability. We let $q_r^{(j,k)}$ denote the probability that if a mutation forms when there are $j$ cells, it is present in $r$ cells when the cell population is $k \ge j$. Then if $p_r^{(k)}$ is the probability a randomly selected mutation is in $r$ cells when the population is of size $k$, we have:

$p_r^{(k)}=\frac{1}{k-1}\sum\limits_{j=1}^{k-r+1}q_r^{(j,k)}$

If the mutation arises when the population has size $j$, then this mutation may be present in any of $1$ to $k-j+1$ cells, depending on whether the cells containing the mutation divide. Thus $j \le k-r+1$. Furthermore, following a population size of $k-1$, we either had $r-1$ copies of the mutation and the next cell division duplicates a copy, or we have $r$ mutant cells, and the dividing cell does not contain the mutation of interest. This gives us the recurrence:

$q_r^{(j,k)}=q_r^{(j,k-1)}(1-\frac{r}{k-1})+q_{r-1}^{(j,k-1)}(\frac{r-1}{k-1})$

Now if we start with the initial value $p_1^{(j,j)}=1$, so that initially one of $j$ cells carries the mutation, then we can show by substitution that this recurrence and initial condition is satisfied by the following expression:

$q_r^{(j,k)}=(j-1)\frac{(k-j)_{r-1}}{(k-1)_r}$

where $(a)_b=a(a-1)\hdots(a-(b-1))$ is the Pochhammer symbol. Substituting into the expression above then gives:

$p_r^{(k)} = \sum\limits_{j=1}^{k-r+1}\frac{j-1}{k-1}\frac{(k-j)_{r-1}}{(k-1)_r}$

This is equivalent to the expression in the theorem.
\end{proof}


\subsection{Distribution of the Number of Clones}

The second problem requires the distribution of the number of clones. Every time a new mutation occurs, it will occur in a single cell that belongs to some clone already present. That cell will divide into two daughters, one of which will contain the new mutation. That cell will have a new combination of mutations and a new clone is born. We thus trivially observe that the number of clones is always one more than the number of cells divisions that produce new mutations. Now, mutations can arise during a cell division. For a colony of size $k$ we have $k-1$ independent cell divisions in total, each of which may generate new mutations with probability $\mu_1$. We thus find that:

\begin{theorem}
If $C$ represents the number of colonies, we find that for a colony of size $k$, $C-1$ has Binomial distribution $Bin(k-1,\mu_1)$.
\end{theorem}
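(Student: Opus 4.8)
The plan is to identify $C-1$ with the number of cell divisions that introduce a new mutation, and then to recognise this count as a sum of $k-1$ independent Bernoulli trials, each of success probability $\mu_1$.

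First I would record the deterministic fact that a colony grown from a single founding cell to size $k$ undergoes exactly $k-1$ divisions, since every division increases the cell count by precisely one, regardless of whether a mutation occurs. This pins the number of ``trials'' at $k-1$ before any randomness enters.

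Next I would establish the exact correspondence $C = 1 + N$, where $N$ is the number of divisions that generate a new mutation. The founding cell constitutes one clone. Because the process is pure birth, no cell, and hence no clone, is ever lost, so the clone count can only increase. Each division that produces a new mutation places that (by assumption unique) mutation into exactly one of the two daughters, endowing it with a combination of mutations not previously present; this daughter seeds exactly one new clone, while its sibling remains in the parent's clone and every other cell retains its clonal identity, so the count rises by precisely one. Conversely, every clone other than the founding one is created by exactly one such division, namely the one at which its defining novel mutation first appeared, and since each mutation is assumed to arise only once, distinct mutation-producing divisions yield distinct new clones. The correspondence between new clones and mutation-producing divisions is therefore a bijection, giving $C - 1 = N$ exactly.

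Finally I would invoke the modelling assumption that the $k-1$ divisions are independent and that each produces a new mutation with probability $\mu_1$, irrespective of the parent cell's existing mutational content. Then $N$ is a sum of $k-1$ independent $\mathrm{Bernoulli}(\mu_1)$ indicators, so $N \sim \mathrm{Bin}(k-1,\mu_1)$ and hence $C-1 \sim \mathrm{Bin}(k-1,\mu_1)$. The only step that genuinely requires care is the clone-counting bijection rather than any probabilistic computation: one must check both that a mutation-producing division raises the clone count by exactly one (neither more, through lost clones, nor less) and that no two divisions are ever credited with the same new clone, the latter being exactly what the single-occurrence assumption on each mutation secures. The probabilistic conclusion is then unusually clean, because the number of divisions is fixed at $k-1$ independently of the outcomes, so there is no complication from a random number of trials and the binomial law is immediate.
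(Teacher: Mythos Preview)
Your proposal is correct and follows essentially the same approach as the paper: identify $C-1$ with the number of mutation-producing divisions, observe that a colony of size $k$ undergoes exactly $k-1$ divisions each of which is independently mutation-producing with probability $\mu_1$, and conclude the binomial law. You are somewhat more careful than the paper in spelling out the bijection between new clones and mutation-producing divisions (checking that each such division raises the count by exactly one and that distinct divisions yield distinct clones), but the underlying argument is the same.
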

\QED


\subsection{Size Distribution of Mutant Clones}

The third question concerns the size of the clones. For example, in Figure \ref{Depth}A we note that clone $2$ was formed in the $3^{rd}$ cell division, and contains a single cell. The associated distribution for the size of a random clone is described in the following result.

\begin{theorem}
If $p_n^{(k)}$ represents the probability a randomly selected clone from a population of size $k$ contains $n$ cells, and $p_n^{(i,k)}$ is the corresponding probability for a clone formed in the $i^{th}$ cell division, then $p_n^{(k)}=\frac{1}{k-1}\sum\limits_{i=1}^{k-n}p_n^{(i,k)}$ where:

$p_n^{(i,k)} = \sum\limits_{\{i < i_1 < i_2 < \hdots <i_{n-1} \le k-1\}}\prod\limits_{j=1}^n\prod\limits_{m=1}^{i_j-i_{j-1}}(1-\frac{j}{i_{j-1}+m-1}\mu_0)\cdot\prod\limits_{j=1}^{n-1}\frac{j}{i_j}\mu_0$
where $i_0=i$.

Furthermore, $p_n^{(i,k)}$ satisfies the following recurrence:

$p_n^{(i,k)} = p_n^{(i,k-1)}\cdot(1-\frac{n}{k-1}\mu_0)+p_{n-1}^{(i,k-1)}\cdot\frac{n-1}{k-1}\mu_0$

with boundary values

$p_1^{(i,k)} = \prod\limits_{j=i+1}^{k-1}(1-\frac{1}{j}\mu_0)$
\end{theorem}
\begin{proof}
A new clone arises whenever a mutation occurs. For a colony of size $k$, a randomly selected mutation arises with equal probability $\frac{1}{k-1}$ at any of the $k-1$ divisions that have taken place. Let us suppose the colony appears at division $i$. The clone contains a single cell at this moment in time and there are $k-1-i$ divisions remaining to take place. If any of these divisions occurs in a cell not in the clone, the clone will not change in size. If the division occurs in a clone cell with mutation, the clone also does not change size because one of the two daughter cells starts a new distinct clone. However, if the division occurs in a clone cell without mutation, the clone increases in size by $1$. If the clone is of size $x$ and we have $y$ cells, this occurs with probability $\frac{x}{y}\mu_0$. Then if $p_n^{(i)}$ represents the probability that a clone starting at division $i$ ends up of size $n$ we require $n-1$ of the remaining $k-1-i$ divisions to be divisions within the clone without mutation. Then if $i_1,i_2,...,i_{n-1}$ denote the corresponding divisions, analogously to the derivation of Theorem \ref{MainResult}, we require the sum:

$\begin{array}{l l l}
p_n^{(i,k)}  &=\sum\limits_{\{i < i_1 < i_2 < \hdots <i_{n-1} \le k-1\}} & (1-\frac{1}{i+1}\mu_0)\cdot(1-\frac{1}{i+2}\mu_0)\cdot \hdots\cdot (1-\frac{1}{i_1-1}\mu_0) \cdot \frac{1}{i_1}\mu_0 \cdot\hdots\\
&  \hdots & (1-\frac{2}{i_1+1}\mu_0)\cdot(1-\frac{2}{i_1+2}\mu_0)\cdot \hdots\cdot (1-\frac{2}{i_2-1}\mu_0) \cdot \frac{2}{i_2}\mu_0\cdot \hdots\\
&  \vdots & \\
&  \hdots & (1-\frac{n-1}{i_{n-2}+1}\mu_0)\cdot \hdots\cdot (1-\frac{n-1}{i_{n-1}-1}\mu_0) \cdot \frac{n-1}{i_{n-1}}\mu_0\cdot \hdots\\
&  \hdots & (1-\frac{n}{i_{n-1}+1}\mu_0)\cdot \hdots\cdot (1-\frac{n}{k-1}\mu_0)\\
\end{array}$

Next note that a clone is equally likely to occur at any of $k-1$ cell divisions, so $p_n^{(k)}=\frac{1}{k-1}\sum\limits_{i=1}^{k-1}p_n^{(i,k)}$. Finally we note that we must have $i \le k-n$ to provide enough cell divisions to reach size $n$, otherwise $p_n^{(i,k)}$ is zero and so we obtain the required form.

For the recurrence we use a telescoping technique. First we note that we can split the sum up to give the following (where $i_0=i$, $i_n=k$ and $n \ge 2$),

$\begin{array}{l l}
p_n^{(i,k)} & = \sum\limits_{i_{n-1}=i+n-1}^{k-1}\sum\limits_{\{i < i_1 < i_2 < \hdots <i_{n-2} \le i_{n-1}-1\}} \prod\limits_{j=1}^n\prod\limits_{m=1}^{i_j-i_{j-1}}(1-\frac{j}{i_{j-1}+m-1}\mu_0)\cdot\prod\limits_{j=1}^{n-1}\frac{j}{i_j}\mu_0\\
& = \sum\limits_{i_{n-1}=i+n-1}^{k-1}\frac{n-1}{i_{n-1}}\mu_0\cdot\prod\limits_{j=i_{n-1}+1}^{k-1}(1-\frac{n}{j}\mu_0)\cdot p_{n-1}^{(i,i_{n-1})}
\end{array}$
 
From this we subtract the corresponding equation for $p_n^{(i,k-1)}\cdot(1-\frac{n}{k-1}\mu_0)$ which results in the term $p_{n-1}^{(i,k-1)}\cdot\frac{n-1}{k-1}\mu_0$. We then have the stated recursion.

For the initial value note that for a clone formed in the $i^{th}$ cell division to remain one cell in size, we must ensure that for each subsequent cell division, the single cell either does not divide (the clone remains the same), or divides with mutation (so one of the daughter cells forms a new clone). This is $1-\frac{1}{j}\mu_0$ for the $j^{th}$ division, which results in the initial condition specified in the theorem.
\end{proof}


\subsection{Number of Mutations in a Random Clone}

Finally, we need the number of mutations in a randomly selected clone. For example, note that clone $2$ from Figure \ref{Depth}A is composed of three mutations, two of which formed during the $3^{rd}$ cell division. In general we have the following result.

\begin{theorem}
Let $X_i$ be the Bernoulli variable with success probability $\frac{2}{i}$ for $i=2,3,...,k$. A clone arises at cell division $i$ with probability $\frac{1}{k-1}$, where $k$ is the total population size. The number of mutations accumulated by a clone formed in cell division $i$ is $Poisson(\lambda\sum_{j=1}^{i-1}X_i)$, where $e^{-\lambda}=\mu_0$.
\end{theorem}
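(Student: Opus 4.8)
The plan is to condition on the clone's founding division $i$ and reduce the mutation count to a Poisson superposition over the ancestral cell divisions of the founding cell. First I would fix the reinterpretation encoded by $e^{-\lambda}=\mu_0$: instead of the binary ``mutation/no mutation'' rule, I regard each cell division as independently generating a $\mathrm{Poisson}(\lambda)$ number of new mutations, so that producing no mutation has probability $e^{-\lambda}=\mu_0$, consistent with the earlier sections. With this convention, the mutations accumulated by the clone founded at division $i$ are exactly those inherited by its founding cell, namely those generated at the division events on the root-to-founder lineage, and the contributions of distinct divisions are independent. Hence, \emph{conditional on} the number $N_i$ of such ancestral divisions, the accumulated mutation count is $\mathrm{Poisson}(\lambda N_i)$ by the additivity of independent Poisson variables. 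The whole theorem then reduces to identifying the law of $N_i$.

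The heart of the argument is therefore the combinatorial claim that $N_i \stackrel{d}{=}\sum_{j=2}^{i}X_j$ with the $X_j\sim\mathrm{Bernoulli}(2/j)$ independent; here $N_i$ is the depth (number of ancestral division events) of the cell that divides at division $i$, equivalently the depth of a uniformly chosen cell in a population of size $i$. I would obtain the marginal $\Pr(X_j=1)=2/j$ by a P\'olya-urn/martingale observation: immediately after the split that first produces the relevant pair of daughters (when the population has size $j$), exactly $2$ of the $j$ cells descend from that split, and since each later division duplicates a uniformly chosen cell, the fraction of the population descending from that split is a bounded martingale. Its expectation stays at its initial value $2/j$, which is precisely the probability that the division in question lies on the lineage of a uniformly chosen later cell.

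To upgrade this to \emph{joint independence}, I would prove the distributional recurrence $D_i \stackrel{d}{=} D_{i-1}+X_i$, where $D_i$ denotes the depth of a uniformly chosen cell in a population of size $i$ and $X_i\sim\mathrm{Bernoulli}(2/i)$ is independent of $D_{i-1}$. Growing the population from size $i-1$ to size $i$ splits one uniformly chosen cell into two daughters; a uniformly chosen cell of the new population is one of these two fresh daughters with probability $2/i$ (in which case its depth is one more than that of the split cell), and otherwise is one of the $i-2$ untouched cells. In either branch the relevant size-$(i-1)$ cell is again uniformly distributed, so its depth is distributed as $D_{i-1}$; moreover the branch indicator is determined by an independent uniform choice in the size-$i$ population, hence independent of $D_{i-1}$. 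Iterating from $D_1=0$ yields $D_i=\sum_{j=2}^{i}X_j$ as a sum of independent $\mathrm{Bernoulli}(2/j)$ variables, which is the required law of $N_i$.

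Finally I would assemble the pieces: a randomly selected clone has founding division uniform on $\{1,\dots,k-1\}$ (probability $1/(k-1)$, exactly as in the clone-size theorem), and conditioning on division $i$ gives, by the two steps above, a mutation count that is $\mathrm{Poisson}\!\big(\lambda\sum_{j=2}^{i}X_j\big)$. The step I expect to be the main obstacle is the independence of the depth indicators $X_j$: the marginal probability $2/j$ is a one-line urn computation, but showing that the indicators over different divisions are genuinely independent (so that $N_i$ is an honest sum of independent Bernoullis, rather than merely having the right mean) is the delicate point, and is the binary-Yule-tree analogue of the classical decomposition of the depth of the last node in a random recursive tree.
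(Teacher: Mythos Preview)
Your approach matches the paper's: reinterpret each division as depositing $\mathrm{Poisson}(\lambda)$ mutations so that $\mu_0=e^{-\lambda}$, observe that the clone formed at division $i$ inherits exactly the mutations on its founder's lineage, and describe membership of each earlier division in that lineage by a Bernoulli indicator with success probability $2/j$. The paper's own proof is in fact sparser than yours---it simply asserts the marginals (``the first cell division occurs in this lineage with probability $1$, the second with probability $\tfrac{2}{3}$, the $r^{\mathrm{th}}$ with probability $\tfrac{2}{r}$'') without derivation and does not address independence of the $X_j$ at all, so your urn/martingale computation of the marginals and the recursive coupling $D_i\stackrel{d}{=}D_{i-1}+X_i$ establishing joint independence supply exactly what the paper leaves implicit.
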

\begin{proof}
New mutations occur during any cell division with a probability $\mu_1=1-\mu_0$. Now if we assume that different mutations arise independently, we can assume they are Poisson distributed per cell division with some parameter $\lambda$ so that $\mu_0=e^{-\lambda}$. Now if a clone occurs at division $i$, any subsequent mutations form new clones and do not belong to this clone. However, any earlier mutations may have been incorporated into its lineage. The first cell division occurs in this lineage with probability $1$, the second division with probability $\frac{2}{3}$, the $r^{th}$ with probability $\frac{2}{r}$. The total number of mutations in the lineage is then a sum of identical Poisson variables over cell divisions in this lineage.
\end{proof}


\section{Conclusions}
We have shown that the number of mutated or proliferating cells in a clone has a natural dependency upon the total clone size, rather than time taken for a single cell to grow into the observed clone, resulting in combinatorial and generating function approaches to analyze their distributions.

The approaches above are applicable to any Markovian branching process where daughter branches in the process have distinct characteristics. In the examples we have discussed, the branches represented the differentiation status, or the mutation status of daughter nodes. However, the branches can be more generally coloured as we like and the distributions of the number of descending nodes examined with these techniques. 

These approaches are exact but can be difficult to handle for large samples sizes and some asymptotics would be useful. Furthermore, the results all assume that the processes of cell division are Markovian and so the cell cycle exponentially distributed. This is unlikely to be accurate, with cell cycle generally being better approximated by gamma distributions. This may have significant effect on some results and warrants further exploration.

\section*{Acknowledgments}

We acknowledge support from the Cambridge Cancer Centre Research Fellowship (AR), and the Medical Research Council (PHJ).

\section*{Appendix}

Alternative proof of Corollary \ref{RuinGF} using stochastic processes methods.

\begin{proof} 

Consider a random walk which moves up or down by one unit at each step, starting from height $1$. We are interested in the number of steps taken until we first reach height $0$.

We let $u_n$ denote the probability of being at height $0$ after $n$ steps, where the walk is initially unrestricted and may move below or above height $0$. This requires $x$ up steps and $x+1$ down steps for some $x \le \frac{n-1}{2}$ and so we obtain the multinomial sum for $n \ge 1$:

$u_n = \sum\limits_{x=0}^{\lfloor \frac{n-1}{2}\rfloor}\frac{n!}{x!(x+1)!(n-2x-1)!}a^xb^{n-2x-1}c^{x+1}$

This can be used to construct an associated generating function:

$\begin{array}{l l l}
U(t) & = & \sum\limits_{n=0}^{\infty}u_nt^n = \sum\limits_{n=1}^{\infty}\sum\limits_{x=0}^{\lfloor \frac{n-1}{2} \rfloor} \frac{n!}{x!(x+1)!(n-2x-1)!}a^xb^{n-2x-1}c^{x+1}t^n \\
& = & \frac{c}{b}\sum\limits_{x=0}^\infty\sum\limits_{n=2x+1}^\infty\frac{n!}{x!(x+1)!(n-2x-1)!}(bt)^n(\frac{ac}{b^2})^x = \frac{c}{b}\sum\limits_{x=0}^\infty\sum\limits_{m=0}^\infty\frac{(m+2x+1)!}{x!(x+1)!(m)!}(bt)^{m+2x+1}(\frac{ac}{b^2})^x \\
& = & ct\sum\limits_{x=0}^\infty{{2x+1} \choose x}(act^2)^x\sum\limits_{m=0}^\infty{{m+2x+1} \choose m}(bt)^m = ct\sum\limits_{x=0}^\infty{{2x+1} \choose x}(act^2)^x\frac{1}{(1-bt)^{2x+2}} \\
& = & \frac{ct}{(1-bt)^2}\sum\limits_{x=0}^\infty{{2x+1} \choose x}(\frac{act^2}{(1-bt)^2})^x = \frac{ct}{(1-bt)^2}\frac{(1-bt)^2}{2act^2}[\frac{1}{(1-\frac{4act^2}{(1-bt)^2})^{\frac{1}{2}}}-1] \\
& = &\frac{1}{2at}[(1-\frac{4act^2}{(1-bt)^2})^{-\frac{1}{2}}-1]
\end{array}$

Here we have used the identity $2z\sum\limits_{x=0}^\infty{2x+1 \choose x}z^x=(1-4z)^{-\frac{1}{2}}-1$ on the penultimate line.

Similarly, we let $v_n$ denote the probabililty of being at height $0$ after $n$ steps, this time starting from height $0$. Again, we do not prohibit negative heights. This requires $x$ up steps and $x$ down steps for some $x \le \frac{n}{2}$ and so we obtain the multinomial sum for $n \ge 1$:

$v_n = \sum\limits_{x=0}^{\lfloor \frac{n}{2}\rfloor}\frac{n!}{x!x!(n-2x)!}a^xb^{n-2x}c^x$

This also has an associated generating function:

$\begin{array}{l l l}
V(t) & = & \sum\limits_{n=0}^{\infty}v_nt^n = \sum\limits_{n=0}^{\infty}\sum\limits_{x=0}^{\lfloor \frac{n}{2} \rfloor} \frac{n!}{(x!)^2(n-2x)!}a^xb^{n-2x}c^xt^n \\
& = & \sum\limits_{x=0}^{\infty}\sum\limits_{n=2x}^\infty \frac{n!}{(x!)^2(n-2x)!}(\frac{ac}{b^2})^x(bt)^n = \sum\limits_{x=0}^{\infty}\sum\limits_{m=0}^\infty \frac{(m+2x)!}{(x!)^2m!}(\frac{ac}{b^2})^x(bt)^{m+2x}\\
& = & \sum\limits_{x=0}^{\infty}\sum\limits_{m=0}^\infty {{m+2x} \choose m}{2x \choose x}(act^2)^x(bt)^m = \sum\limits_{x=0}^{\infty}{2x \choose x}(act^2)^x\sum\limits_{m=0}^\infty {{m+2x} \choose m}(bt)^m \\
& = & \sum\limits_{x=0}^{\infty}{2x \choose x}(act^2)^x\frac{1}{(1-bt)^{2x+1}} = \frac{1}{1-bt}\sum\limits_{x=0}^{\infty}{2x \choose x}(\frac{act^2}{(1-bt)^2})^x \\
& = & \frac{1}{1-bt}(1-\frac{4act^2}{(1-bt)^2})^{-\frac{1}{2}} \\
\end{array}$

We are interested in the first visit to height $0$ starting from height $1$. Now, if we know we are at height $0$ after $n$ steps, then there must be a first visit to height zero after $r$ steps for some $r$ with $1 \le r \le n$. If $f_r$ represents the probability of a first visit to $0$ after $r$ steps we then have the discrete convolution:

$u_n = \sum_{r=1}^nf_rv_{n-r}$

Multiplying by $t^n$ and summing then results in the following relation between generating functions:

$U=FV$

where $F=\sum_{r=0}^{\infty}f_rt^r$ is the generating funtion for the probabilities $f_r$ we desire. Then substituting the generating functions above yields the following:

$F(t)=\frac{1-bt}{2at}[1-(1-\frac{4act^2}{(1-bt)^2})^{\frac{1}{2}}]$

To obtain the required expression in Corollary \ref{RuinGF}, we note that the generating function $G(t)=\sum\limits_{n=0}^\infty P_{n,0}t^n$ relates to the probability of ruin $P_{n,0}$ when there are $n$ cells present. We start from $1$ cell, so this involves $n-1$ steps and we find $f_{n-1}=P_n$. In terms of generating functions, we find $G(t)=tF(t)$, which gives the desired form for $G(t)$.
\end{proof}

\footnotesize{

}


\begin{thebibliography}{9}

\bibitem{Clayton} Clayton E., Doupé D.P., Klein A.M., Winton D.J., Simons B.D., and Jones P.H. 2007 A single type of progenitor cell maintains normal epidermis. \emph{Nature} {\bf 446} 185–189. (DOI 10.1038/nature05574).
\bibitem{Doupe} Doupé D.P., Alcolea M.P., Roshan A., Zhang G., Klein A.M., Simons B.D., and Jones P.H. 2012 A single progenitor population switches behavior to maintain and repair esophageal epithelium. \emph{Science} {\bf 337}, 1091–1093. (DOI 10.1126/science.1218835).
\bibitem{Mascre} Mascré G., Dekoninck S., Drogat B., Youssef K.K., Brohée S., Sotiropoulou P.A., Simons B.D., and Blanpain C. 2012 Distinct contribution of stem and progenitor cells to epidermal maintenance. \emph{Nature} {\bf 489}, 257-62. (DOI 10.1038/nature11393).
\bibitem{Klein} Klein A.M., Nakagawa T., Ichikawa R., Yoshida S., and Simons B.D. 2010 Mouse germ line stem cells undergo rapid and stochastic turnover. \emph{Cell Stem Cell} {\bf 7}, 214–224. (DOI 10.1016/j.stem.2010.05.017).
\bibitem{Snippert} Snippert H.J., van Der Flier L.G., Sato T., van Es J.H., van Den Born M., Kroon-Veenboer C., Barker N., Klein A.M., van Rheenen J., Simons B.D., et al. 2010 Intestinal crypt homeostasis results from neutral competition between symmetrically dividing Lgr5 stem cells. \emph{Cell} {\bf 143}, 134-44. (DOI 10.1016/j.cell.2010.09.016).
\bibitem{Barrandon} Barrandon Y., and Green H. 1987 Three clonal types of keratinocyte with different capacities for multiplication. \emph{Proc. Natl. Acad. Sci. U. S. A.} {\bf 84}, 2302-6. 
\bibitem{Kretzschmar} Kretzschmar K., \& Watt F.M. 2012 Lineage tracing. \emph{Cell} {\bf 148}, 33–45. (DOI 10.1016/j.cell.2012.01.002).
\bibitem{Alcolea} Alcolea M.P., and Jones P.H. 2013 Tracking cells in their native habitat: lineage tracing in epithelial neoplasia. \emph{Nat. Rev. Cancer.} {\bf 13}, 161-71. (DOI 10.1038/nrc3460).
\bibitem{Blanpain} Blanpain C., and Simons B.D. 2013 Unravelling stem cell dynamics by lineage tracing. \emph{Nat. Rev. Mol. Cell Biol} {\bf 14}, 489–502. (DOI 10.1038/nrm3625).
\bibitem{Klein2} Klein A.M., and Simons B.D. 2011 Universal patterns of stem cell fate in cycling adult tissues. \emph{Development} {\bf 138}, 3103-11. (DOI 10.1242/dev.060103).
\bibitem{Jones} Klein A.M., Brash D.E., Jones P.H., and Simons B.D. 2010 Stochastic fate of p53-mutant epidermal progenitor cells is tilted toward proliferative by UV B during preneoplasia. \emph{Proc. Natl. Acad. Sci. U. S. A.} {\bf 107}, 270–275. (DOI 10.1073/pnas.0909738107).
\bibitem{Greco} Rompolas P., Deschene E.R., Zito G.,Gonzalez D.G., Saotome I., Haberman A.M., and Greco V. 2012 Live imaging of stem cell and progeny behaviour in physiological hair-follicle regeneraition. \emph{Nature} {\bf 487}, 496-499. (DOI 10.1038/nature11218).
\bibitem{Youssef} Youssef K.K., Van Keymeulen A., Lapouge G., Beck B., Michaux C., Achouri Y., Sotiropoulou P.A., and Blanpain C. 2010 Identification of the cell lineage at the origin of basal cell carcinoma. \emph{ Nat. Cell Biol.} {\bf 12}, 299–305. (DOI 10.1038/ncb2031).
\bibitem{Lapouge} Lapouge G., Youssef K.K., Vokaer B., Achouri Y., Michaux C., Sotiropoulou P.A., and Blanpain C. 2011 Identifying the cellular origin of squamous skin tumors. \emph{Proc. Natl. Acad. Sci. U. S. A.} {\bf 108}, 7431-7436. (DOI 10.1073/pnas.1012720108).
\bibitem{Schepers} Schepers A.G., Snippert H.J., Stange D.E., van Den Born M., van Es J.H., van De Wetering M., and Clevers H. 2012 Lineage tracing reveals Lgr5+ stem cell activity in mouse intestinal adenomas. \emph{Science} {\bf 337}, 730-735. (DOI 10.1126/science.1224676).
\bibitem{Barker} Barker N., Ridgway R.A., van Es J.H., van de Wetering M., Begthel H., van den Born M., Danenberg E., Clarke A.R., Sansom O.J., and Clevers H. 2009 Crypt stem cells as the cells-of-origin of intestinal cancer. \emph{Nature} {\bf 457}, 608-611. (DOI 10.1038/nature07602).
\bibitem{LuriaDelbruck} Luria S.E., and Delbr\"{u}ck M. 1943 Mutations of bacteria from virus sensitivity to virus resistance. \emph{Genetics} {\bf 28}, 491-511.
\bibitem{LeaCoulson} Lea D.E., and Coulson C.A. 1949 The distribution of the number of mutants in bacterial populations. \emph{J. Genetics} {\bf 49}, 264-285.
\bibitem{Bartlett} Bartlett M. 1978 \emph{An introduction to the stochastic process}, 3rd ed. Cambridge: Cambridge University Press. 
\bibitem{Bartlett2} Armitage P. 1952 The statistical theory of bacterial populations subject to mutation. {\emph J. Royal Statist. Soc. B} {\bf 14}, 1-33.
\bibitem{Haldane} Sarkar S. 1991 Haldane's solution of the Luria-Delbr\"{u}ck distribution.\emph{Genetics} {\bf 127}, 257-261.
\bibitem{Nik-Zainal}  Nik-Zainal S., Van Loo P., Wedge D.C.,  Alexandrov L.B., Greenman C.D.,  Lau K.W., Raine K., Jones D.,  Marshall J., Hinton J., et al. 2012 The life history of 21 breast cancers. \emph{Cell} {\bf 149}, 994-1007. (DOI 10.1016/j.cell.2012.04.023).
\bibitem{Kendall} Kendall D.G. 1960 Birth-and-death process and the theory of carcinogenesis. \emph{Biometrika} {\bf 47}, 13-21. (DOI 10.1093/biomet/47.1-2.13).
\bibitem{Zheng} Zheng Q. 1999 Progress of a half century in the study of the Luria-Delbr\"{u}ck distribution. \emph{Mathematical Biosciences} {\bf 162}, 1-32.
\bibitem{Stahl} St\r{a}hl P.L., Stranneheim H., Aspland A., Berglund L., Pont\'{e}n F., and Lundeberg J. 2011 \emph{J. Investigative Dermatology} {\bf 131}, 504-508. (DOI 10.1038/jid.2010.302).
\bibitem{Bont} DeBont R., and Larebeke N.V. 2004 Endogenous DNA damage in humans: a review of quantitative data. \emph{Mutagenesis} {\bf 19}, 3, 169-185. (DOI 10.1093/mutage/geh025).
\bibitem{Greenman} Greenman C.D., Cooke S.L., Marshall J., Stratton M.R., and Campbell P.J. 2013 Modelling breakage-fusion-bridge cycles as a stochastic folding process. \emph{arXiv:1211.2356 [q-bio.GN]}.
\bibitem{McClintock} McClintock, B. 1941. The stability of broken ends of chromosomes in Zea mays. \emph{Genetics} {\bf 26}: 234–282.
\bibitem{Kampen} Kampen N.G.V. 2007 \emph{Stochastic processes in physics and chemistry}. Amsterdam: Elsevier.
\bibitem{Motzkin} Motzkin T. 1948 Relations between hypersurface cross ratios, and a combinatorial formula for partitions of a polygon, for permanent preponderance, and for non-associative products \emph{Bull. Amer. Math. Soc.} {\bf 54}, 352-360.
\bibitem{Mohanty} Mohanty S.G. 1979 \emph{Lattice Paths Counting and Applications}. New York: Academic Press.
\bibitem{Narayana} Narayana T.V. 1979 \emph{Lattice Paths Combinatorics with Statistical Applications}. Toronto: University of Toronto Press.
\bibitem{Donaghey} Donahey R. and Shapiro L.W. 1977 Motzkin Numbers. \emph{J. Comb Theory Ser. A} {\bf 23}, 291-301.
\bibitem{Sulanke} Sulanke R.A. 2001 Bijective recurrences for Motzkin paths \emph{Advances in Applied Mathematics} {\bf 27}, 627–640.
\bibitem{Lengyel} Lengyel T. 2011 Gambler's ruin and winning a series by \emph{m} games. \emph{Ann Inst Stat Math} {\bf 63}, 181-195.
\bibitem{Niederhausen} Niederhausen H. 1998 Lattice paths between diagonal boundaries. \emph{Elec J. Comb} {\bf 5}, \#R30.
\bibitem{Stanley} Stanley R.P. 1999 \emph{Enumerative Combinatorics} vol. 2. Cambridge: Cambridge University Press.
\bibitem{Meshkov} Meshkov V.R., Omelchenko A.V., Petrov M.I., and Tropp E,A. 2010 Dyck and Motzkin triangles with multiplicities. \emph{Moskow Math. J.} {\bf 10}, 611-628.
\bibitem{Bailey} Bailey D.F. 1996 Counting arrangements of 1's and -1's. \emph{Math. Mag.} {\bf 69}, 128-131.
\bibitem{Shapiro} Shapiro L.W. 1976 A Catalan triangle. \emph{Discrete Mathematics} {\bf 14}, 83-90.
\bibitem{Motzkin2} Motzkin Triangle, Sequence A026300 [Internet]. Online Encycl. Integer Seq. 2010. Available from: http://oeis.org/A026300.

\end{thebibliography}
\end{document}